\documentclass[11pt]{article}
\usepackage{palatino}
\usepackage{microtype}
\usepackage{pgfplots} 
\usepackage{tikz, tikzpeople}
\usetikzlibrary{scopes}
\usetikzlibrary{calc,intersections,decorations.pathmorphing,decorations.markings,patterns,arrows.meta,positioning}
\usetikzlibrary{angles,quotes}
\usetikzlibrary{arrows.meta}
\usetikzlibrary{arrows}
\usetikzlibrary{shapes.multipart}
\usetikzlibrary{fit}
\usetikzlibrary{shapes.geometric,positioning}
\usetikzlibrary{calc}
\usepackage{bm}
\usepackage[utf8]{inputenc}
\usepackage{mathtools}
\usepackage{xfrac}
\usepackage{graphicx}
\usepackage[linesnumbered,ruled,vlined]{algorithm2e}
\usepackage{cite}
\usepackage{amsmath,amssymb,amsfonts, amsthm,  mathrsfs}
\usepackage{algpseudocode}
\usepackage{textcomp}
\usepackage{xcolor}
\usepackage{braket}
\usepackage{enumitem}
\usepackage{cancel}
\usepackage{hyperref}
\usepackage{stfloats}

\usetikzlibrary{patterns}

\newtheorem{theorem}{Theorem}

\newtheorem{remark}{Remark}
\newtheorem{corollary}{Corollary}
\newtheorem{definition}{Definition}

\newlength{\leftstackrelawd}
\newlength{\leftstackrelbwd}
\def\leftstackrel#1#2{\settowidth{\leftstackrelawd}%
{${{}^{#1}}$}\settowidth{\leftstackrelbwd}{$#2$}%
\addtolength{\leftstackrelawd}{-\leftstackrelbwd}%
\leavevmode\ifthenelse{\lengthtest{\leftstackrelawd>0pt}}%
{\kern-.5\leftstackrelawd}{}\mathrel{\mathop{#2}\limits^{#1}}}

\newcommand{\s}{\sigma}

\newcommand{\mN}{\mathsf{N}} 
\newcommand{\mZ}{\mathsf{Z}} 
\newcommand{\mP}{\mathsf{P}}
\newcommand{\mE}{\mathsf{E}} 
\newcommand{\mD}{\mathsf{D}}

\newcommand{\sfp}{\mathsf{p}}
\newcommand{\sfq}{\mathsf{q}}

\newcommand{\C}{\mathchoice
  {\mathrm{\scriptscriptstyle C}} 
  {\mathrm{\scriptscriptstyle C}} 
  {\mathrm{\scriptscriptstyle C}} 
  {\mathrm{\scriptscriptstyle C}} 
}

\newcommand{\NS}{\mathchoice
  {\mathrm{\scriptscriptstyle NS}} 
  {\mathrm{\scriptscriptstyle NS}} 
  {\mathrm{\scriptscriptstyle NS}} 
  {\mathrm{\scriptscriptstyle NS}} 
}

\newcommand{\ca}{\mathchoice
  {\mathrm{\scriptscriptstyle ca}} 
  {\mathrm{\scriptscriptstyle ca}} 
  {\mathrm{\scriptscriptstyle ca}} 
  {\mathrm{\scriptscriptstyle ca}} 
}

\newcommand{\strc}{\mathchoice
  {\mathrm{\scriptscriptstyle sc}} 
  {\mathrm{\scriptscriptstyle sc}} 
  {\mathrm{\scriptscriptstyle sc}} 
  {\mathrm{\scriptscriptstyle sc}} 
}

\newcommand{\mixed}{\mathchoice
  {\mathrm{\scriptscriptstyle mixed}} 
  {\mathrm{\scriptscriptstyle mixed}} 
  {\mathrm{\scriptscriptstyle mixed}} 
  {\mathrm{\scriptscriptstyle mixed}} 
}

\newcommand{\nc}{\mathchoice
  {\mathrm{\scriptscriptstyle nc}} 
  {\mathrm{\scriptscriptstyle nc}} 
  {\mathrm{\scriptscriptstyle nc}} 
  {\mathrm{\scriptscriptstyle nc}} 
}

\newcommand{\opt}{\mathchoice
  {\mathrm{\scriptscriptstyle opt}} 
  {\mathrm{\scriptscriptstyle opt}} 
  {\mathrm{\scriptscriptstyle opt}} 
  {\mathrm{\scriptscriptstyle opt}} 
}

\newcommand{\R}{\mathchoice
  {\mathrm{\scriptscriptstyle  R}} 
  {\mathrm{\scriptscriptstyle  R}} 
  {\mathrm{\scriptscriptstyle  R}} 
  {\mathrm{\scriptscriptstyle  R}} 
}
 
\newcommand{\CSIR}{\mathchoice
  {\mathrm{\scriptscriptstyle  CSIR}} 
  {\mathrm{\scriptscriptstyle  CSIR}} 
  {\mathrm{\scriptscriptstyle  CSIR}} 
  {\mathrm{\scriptscriptstyle  CSIR}} 
}

\allowdisplaybreaks

\title{Non-signaling Assisted Capacity of a Classical Channel\\ with Causal CSIT}

\author{Yuhang Yao, Syed A. Jafar\\
{\small University of California Irvine, Irvine, CA 92697}\\
{\small \it Email: \{yuhangy5, syed\}@uci.edu}
}

\date{}

\begin{document}
\maketitle

\begin{abstract}
The non-signaling (NS) assisted capacity of a classical discrete memoryless channel with causal channel state information at the transmitter (CSIT) is shown to be $C^{\NS,\ca}=\max_{\mP_{\!X|S}}I(X;Y\mid S)$, where $X, Y, S$ correspond to the input, output and state of the channel. Remarkably, this is the same as the  capacity of the channel in the NS-assisted non-causal CSIT setting, $C^{\NS,\nc}=\max_{\mP_{\!X|S}}I(X;Y\mid S)$, which was previously established, and also matches the (either classical or with NS assistance) capacity of the channel where the state is available not only (either causally or non-causally) to the transmitter but also to the receiver. While the capacity remains unchanged, the optimal probability of error for fixed message  size and blocklength, in the NS-assisted causal CSIT setting can be further improved if channel state is made available to the receiver. This is in contrast to corresponding NS-assisted non-causal CSIT setting where it was previously noted that the optimal probability of error cannot be further improved by providing the state to the receiver. As a separate result we prove that non-signaling assistance, feedback, and strictly causal CSIT (i.e., transmitter knows only past channel states but not the current or future states), each of which is individually already known to not increase capacity, also cannot increase capacity when they are collectively made available to the transmitter.
\end{abstract}
\newpage
 
\section{Introduction}
The `channel with state' is a canonical setting in information theory, defined by  the parameters $(\mathcal{X},\mathcal{Y},\mathcal{S},\mN(y\mid x,s), \mP_{\!S})$  such that $\mathcal{X},\mathcal{Y},\mathcal{S}$ are the alphabet sets corresponding to the channel input, output and state, $\mP_{\!S}$ is the distribution of the state, and $\mN(y\mid x,s)$ is the probability that the channel produces output $y\in\mathcal{Y}$ given the input  $x\in\mathcal{X}$ and the channel state  $s\in\mathcal{S}$. The channel is memoryless, the state sequence is i.i.d, and depending on whether the \underline{c}hannel \underline{s}tate \underline{i}nformation at the \underline{t}ransmitter (CSIT) is available \emph{strictly} causally (only past states are known), causally (past and present states are known) or non-causally (past, present and future states are all known in advance) in time, the channel capacity values are different. The capacity with strictly causal CSIT is the same as without CSIT \cite{lapidothCommonState}, $C^{\C,\strc}(\mN,\mP_{\!S})=\max_{\mP_X}I(X;Y)$. The capacity with causal CSIT is shown by Shannon in \cite{shannon_CSIT} to be $C^{\C,\ca}(\mN,\mP_{\!S})=\max_{\mP_U\mP_{\!X\mid U,S}}I(U;Y)$, and that with non-causal CSIT is shown by Gelfand and Pinsker in \cite{gel1980coding} to be $C^{\C,\nc}(\mN,\mP_{\!S})=\max_{\mP_{\!UX\mid S}}I(U;Y)-I(U;S)$. Here the superscripts \scalebox{1.5}{$\strc,\ca, \nc, \C$} stand for `\emph{strictly causal CSIT},'  `\emph{causal CSIT},' `\emph{non-causal CSIT},' and `\emph{classical},' respectively.

The aforementioned are \emph{classical} capacity results in the sense that they allow only classical resources for encoding and decoding operations at the transmitter and receiver. These capacity results do not apply if the encoding and decoding operations are allowed to utilize quantum resources that may be shared in advance. The challenge is that while pre-shared quantum entanglement  between a transmitter and a receiver is a \emph{non-signaling} (NS) resource, i.e., it cannot by itself enable any communication between them, it still enables \emph{non-local correlations}, that are not included in the classical framework. Indeed, the capacity with shared quantum-entanglement is known to be strictly higher than the classical  capacity in some settings. For example,  in a $2$-to-$1$ classical multiple access channel (MAC) studied in \cite{leditzky2020playing}, \cite{seshadri2023separation}, when the two transmitters have entangled quantum resources, they can communicate at a sum-rate higher than the classical sum-capacity of that MAC. There are also well known cases, such as the classical point-to-point channel, where the capacity with quantum-entanglement is the same as the classical capacity \cite{Bennett_Shor_Smolin_Thapliyal_PRL}, even though the finite blocklength probability of error can be significantly improved \cite{cubitt2011zero,agarwal2024nonlocalityassistedenhancementerrorfreecommunication}. Remarkably, it is shown in \cite{matthews2012linear} that in a classical point-to-point discrete memoryless channel (without state) even if the encoder and decoder are allowed to utilize any shared non-signaling resource (which strictly includes quantum resources), the capacity (called the NS-assisted capacity) is the same as the classical capacity. 

The NS-assisted capacity formulation allows communicating parties to share any resources in advance as long as those resources by themselves are not capable of allowing the parties to communicate with each other. NS-assisted capacity is of particular interest for a number of reasons --- 1) because non-signaling correlations are much more tractable than quantum correlations, 2) because the tractability often allows capacity expressions in more computable forms, 3) because quantum-assisted capacity is sandwiched between classical and NS-assisted capacity, so the latter helps narrow down the search for settings where the biggest capacity gains from quantum resources may be found, and 4) because the NS-assisted problem formulation leads to efficient linear programming approaches for finite blocklength performance analysis. Indeed, NS-assisted capacity has been explored for various canonical networks, such as multiple access \cite{fawzi2024MAC}, broadcast \cite{fawzi2024broadcast,Yao_Jafar_NS_DoF} and interference channels \cite{Quek_Shor}. Remarkably, there are known instances of $K$-user broadcast channels for which the NS-assisted capacity is  a factor of $K$ larger than their classical sum-rate capacity \cite{Yao_Jafar_NS_DoF}. For the point-to-point channel with state, $(\mathcal{X},\mathcal{Y},\mathcal{S},\mN(y\mid x,s), \mP_{\!S})$, the NS-assisted capacity with non-causal CSIT is shown to be $C^{\NS,\nc}(\mN,\mP_{\!S})=\max_{\mP_{\!X|S}}I(X;Y\mid S)$ in \cite{Yao_Jafar_CSITTP}, which can have an unbounded multiplicative gap to classical capacity $C^{\C,\nc}(\mN,\mP_{\!S})$. Despite these significant advances, the existing knowledge remains quite limited regarding how non-local correlations enabled by non-signaling resources impact capacity results even for the elemental settings in information theory. For example, the NS-assisted capacity $C^{\NS,\ca}(\mN,\mP_{\!S})$ is unknown for a channel with state and \emph{causal CSIT}. This setting is our main focus in this work. As a separate but relatively straightforward result, we also determine the NS-assisted capacity under \emph{strictly} causal CSIT. The results are summarized in Table \ref{tab:capacity_comparison_full}.

As noted in Table \ref{tab:capacity_comparison_full}, we prove (Theorem \ref{thm:NS_capacity}) that $C^{\NS,\ca}(\mN,\mP_{\!S})=\max_{\mP_{\!X\mid S}}I(X;Y\mid S)$, which coincides with both 1) the classical capacity of a channel with state known by both the transmitter (CSIT) and the receiver (CSIR), and 2) the capacity of a channel with state with NS assistance and non-causal CSIT \cite{Yao_Jafar_CSITTP}. Evidently under NS assistance, coding with causal CSIT achieves the same capacity as that with non-causal CSIT. However, there is a clear distinction between the achievability argument in this work and that in \cite{Yao_Jafar_CSITTP}. In \cite{Yao_Jafar_CSITTP}, it is shown that the \emph{optimal probability of success} for any finite number of channel uses and any finite message size achieved with non-causal CSIT is equal to that achieved with both non-causal CSIT and CSIR. Therefore in \cite{Yao_Jafar_CSITTP} the CSIT was thought of as being `virtually signaled' to the receiver with  the measure being the probability of successful decoding (capacity thus follows). 
However, in this work, we prove (Theorem \ref{thm:Prob}) via an example of a channel with state and causal CSIT, that the NS-assisted optimal probability of success for a fixed message size and a fixed number of channel uses can be strictly improved by CSIR. Therefore, the NS-assisted coding schemes with \emph{causal CSIT} in this work are necessarily strictly weaker in general than those with non-causal CSIT in \cite{Yao_Jafar_CSITTP}. Finally, if  CSIT is strictly causal, then we show (Theorem \ref{thm:strictly_causal}) that that capacity is the same as with no CSIT and no NS-assistance.

\begin{table*}[t]
\centering
\caption{A comparison of classical and NS-assisted channel capacities with different types of CSIT. $X,Y$ and $S$ denote the input, output, and channel state, respectively.}
\label{tab:capacity_comparison_full}
\vspace{0.1cm}
\begin{tabular}{|c|c|c|}
	\hline Channel capacity & Classical (C) & Non-signaling (NS) \\ \hline No CSIT ($\emptyset$) & \begin{tabular}[c]{@{}c@{}}{\footnotesize $C^{\C,\emptyset}= \max_{\mP_{\!X}} I(X;Y)$}  \end{tabular}  & \begin{tabular}[c]{@{}c@{}}{\footnotesize $C^{\NS,\emptyset}=\max_{\mP_{\!X}} I(X;Y)$}   {\footnotesize \cite{matthews2012linear}} \end{tabular} \\ \hline Strictly causal CSIT (sc) & \begin{tabular}[c]{@{}c@{}}{\footnotesize $C^{\C,\strc}= \max_{\mP_{\!X}} I(X;Y)$} {\footnotesize \cite{lapidothCommonState}} \end{tabular}  & \begin{tabular}[c]{@{}c@{}}{\footnotesize $C^{\NS,\strc}=\max_{\mP_{\!X}} I(X;Y)$} \\ {\color{blue}\footnotesize [This work, Thm. 3]} \end{tabular}  \\\hline Causal CSIT (ca) &  \begin{tabular}[c]{@{}c@{}}{\footnotesize $C^{\C,\ca}= \max_{\mP_{\!U}, x(u,s)}I(U;Y)$} \\ {\footnotesize \cite{shannon_CSIT}, \cite[Thm. 7.2]{NIT}} \end{tabular} & \begin{tabular}[c]{@{}c@{}}{\footnotesize $C^{\NS,\ca}=\max_{\mP_{\!X\mid S}}I(X;Y\mid S)$} \\ {\color{blue}\footnotesize [\mbox{This work}, Thm. 1]} \end{tabular} \\\hline Non-causal CSIT (nc) & \begin{tabular}[c]{@{}c@{}}{\footnotesize $C^{\C,\nc}=\max_{\mP_{\!U\mid S}, x(u,s)}\big(I(U;Y)-I(U;S) \big)$} \\ {\footnotesize \cite{gel1980coding}, \cite[Thm. 7.3]{NIT}} \end{tabular} &  \begin{tabular}[c]{@{}c@{}}{\footnotesize $C^{\NS,\nc}=\max_{\mP_{\!X\mid S}}I(X;Y\mid S)$} \\ {\footnotesize \cite[Thm. 2]{Yao_Jafar_CSITTP}} \end{tabular} \\\hline
\end{tabular}
\end{table*}

\section{Preliminaries}
\subsection{Notation}
$\mathbb{N}$ denotes the set of positive integers. $\mathbb{R}_{\geq 0}$ denotes the set of non-negative reals. For $i, j \in \mathbb{N}$, $[i:j]$ denotes the set $\{i,i+1,\ldots, j\}$ if $i\leq j$ and the empty set otherwise.  $x_i^j$ is the shorthand notation for $(x_k: k\in[i:j])$. $[n]$ and $x^n$ are used to represent $[1:n]$ and $x_1^n$, respectively. 
$\mathbb{I}[x]$ denotes the indicator function, returning $1$ when the predicate $x$ is true and $0$ otherwise. $\Pr(E)$ denotes the probability of an event $E$.  
$\mathcal{X} \times \mathcal{Y}$ denotes the Cartesian product of $\mathcal{X}$ and $\mathcal{Y}$, and $\mathcal{X}^n$ denotes the $n$-fold Cartesian product of $\mathcal{X}$.

\subsection{Information-theoretic quantities}
For a finite set $\mathcal{A}$, let $\mathcal{P}(\mathcal{A})$ denote the set of probability mass functions (distributions) on $\mathcal{A}$, i.e., the set of all functions $\mP \colon \mathcal{A} \to \mathbb{R}_{\geq 0}$ such that $\sum_{a\in \mathcal{A}}\mP (a) = 1$.
For discrete sets $\mathcal{X}$ and $\mathcal{Y}$ , we use $\mathcal{P}(\mathcal{X} \mid \mathcal{Y})$ to denote the set of conditional distributions where the input variable is defined on the set $\mathcal{X}$ and the output variable is defined on the set $\mathcal{Y}$, i.e., the set of all functions $\mP \colon \mathcal{X}\times \mathcal{Y} \to \mathbb{R}_{\geq 0}$ such that for each $y\in \mathcal{Y}$, $\sum_{x\in \mathcal{X}} \mP(x\mid y) = 1$.

The notation $X\sim \mP$ indicates that the random variable $X$ has distribution $\mP$. Given that a pair of random variables $(X,Y) \sim \mP \in \mathcal{P}(\mathcal{X} \times \mathcal{Y})$, we use $\mP_{\!X}\in \mathcal{P}(\mathcal{X})$ and $\mP_{\! Y}\in \mathcal{P}(\mathcal{Y})$ to denote the marginal distributions thus defined for $X$ and $Y$, respectively, such that $\mP_{\!X}(x) = \sum_{y \in \mathcal{Y}}\mP(x,y)$ for each $x\in \mathcal{X}$, and $\mP_{\!Y}(y) = \sum_{x \in \mathcal{X}}\mP(x,y)$ for each $y\in \mathcal{Y}$.
We use $\mP_{\! X\mid Y} \in \mathcal{P}(\mathcal{X} \mid \mathcal{Y})$ to denote the conditional distribution of $X$ given $Y$, such that $\mP_{\! X\mid Y}(x\mid y) = \mP_{\! XY}(x,y)/\mP_{\! Y}(y)$ if $\mP_{\! Y}(y)\not=0$, and $\mP_{\! X\mid Y}(x\mid y)$ be any distribution in $\mathcal{P}(\mathcal{X})$ if $\mP_{\! Y}(y) =0$. The conditional distribution of $Y$ given $X$ is denoted as $\mP_{\! Y\mid X} \in \mathcal{P}(\mathcal{Y} \mid \mathcal{X}) $ and is defined similarly.
When the relevant variables are clear from the context, we sometimes omit the subscript of the distribution. For example, we may write $\mP (y\mid x)$ instead of $\mP_{\! Y\mid X}(y\mid x)$.

For a random variable $X\sim \mP$ taking values in $\mathcal{X}$ and a function $f\colon \mathcal{X} \to \mathbb{R}$, we write $\mathbb{E}_{\mP}[f(X)] = \sum_{x\in \mathcal{X}}\mP(x)  f(x)$ to denote the expectation of $f(X)$ under the distribution $\mP$.
The entropy of $X$ is then defined as $H_{\mP}(X) \triangleq -\mathbb{E}_{\mP}[\log_2(\mP(X))]$. Conditional entropy $H_{\mP}(Y\mid X)$, mutual information $I_{\mP}(X;Y)$, and conditional mutual information $I_{\mP}(X;Y\mid Z)$ are defined in the standard way. The subscript $\mP$ is often omitted when the underlying distribution is clear from the context. 
For two distributions $\sfp, \sfq \in \mathcal{P}(\mathcal{X})$, the relative entropy of $\sfp$ with respect to $\sfq$, also called the KL divergence,  is defined as $D(\sfp \Vert \sfq) \triangleq \sum_{x\in \mathcal{X}} \sfp(x)\log_2 \big(\frac{\sfp(x)}{\sfq(x)}\big) = \mathbb{E}_\sfp \big[\log_2\big(\frac{\sfp(X)}{\sfq(X)}\big)\big]$.

\subsection{Sequential non-signaling correlations} \label{sec:SNSC}
For the purpose of studying coding with causal (or strictly causal) CSIT assisted by non-signaling resources, it is useful to consider the following class of conditional distributions. These distributions are special cases of the sequential correlations studied in \cite{gallego2014nonlocality} and the Time-Ordered No-Signaling (TONS) boxes studied in \cite{TONS_boxes}.

Consider two parties, Alice and Bob, who have access to a general non-signaling resource. Such a resource may be viewed as a black box with inputs and outputs on Alice's and Bob's sides. On Alice's side, the resource sequentially admits $n$ inputs, namely $A_i\in \mathcal{A}_i$, and sequentially responds with $n$ outputs, namely $U_i\in\mathcal{U}_i$, for $i=1,2,\ldots, n$. Specifically, after Alice inputs $A_i$, the resource  produces $U_i$. On Bob's side, the resource admits an input $B\in \mathcal{B}$ and produces an output $V\in \mathcal{V}$. $\mathcal{A}_1, \cdots, \mathcal{A}_n,\mathcal{B}, \mathcal{U}_1,\cdots,\mathcal{U}_n, \mathcal{V}$ are finite sets.

Let $\mP_{\! U_1 \ldots U_n V \mid A_1 \cdots A_n, B} \in \mathcal{P}(\mathcal{U}_1\times \cdots \times \mathcal{U}_n \times \mathcal{V} \mid \mathcal{A}_1 \times \cdots \times \mathcal{A}_n \times \mathcal{B})$ denote the conditional distribution of $(U_1,\ldots, U_n,V)$ given $(A_1,\ldots, A_n,B)$. Then $\mP_{\! U_1 \ldots U_n V \mid A_1 \cdots A_n B}$ satisfies the following conditions.
\begin{enumerate}
	\item $\mP_{U_1\cdots U_n\mid A_1 \cdots A_n B}(u_1,\ldots, u_n \mid a_1,\ldots, a_n, b)$ is invariant under changes of $b\in \mathcal{B}$;
	\item $\mP_{V \mid A_1 \cdots A_n B}(v \mid a_1,\ldots, a_n, b)$ is invariant under changes of $(a_1,\ldots, a_n) \in \mathcal{A}_1\times \cdots \times \mathcal{A}_n$;
	\item For each $i\in [n-1]$, the marginal distribution $\mP_{U_1\ldots U_i \mid A_1 \cdots A_n B}(u_1,\ldots, u_i \mid a_1,\ldots, a_n, b)$ is invariant under changes of $(a_{i+1},\ldots, a_n)\in \mathcal{A}_{i+1} \times \cdots \times \mathcal{A}_{n}$
\end{enumerate}
We refer to $\mP_{\! U_1 \ldots U_n V \mid A_1 \cdots A_n, B}$ satisfying these constraints as \emph{sequential non-signaling correlations}.

\subsection{Channel with state}
We adopt the standard definition of a discrete memoryless channel with state \cite[Sec. 7.1]{NIT}. 
A channel with state is specified by (finite) alphabets $(\mathcal{X}, \mathcal{Y}, \mathcal{S})$, and a tuple $(\mN, \mP_{\! S})$ where $\mN\in \mathcal{P}(\mathcal{Y} \mid \mathcal{X}\times \mathcal{S})$ and $\mP_{\! S} \in \mathcal{P}(\mathcal{S})$.
$\mN (y\mid x,s)$ specifies the channel's conditional probability distribution, i.e., the probability of  output  $Y=y$ given the input $X=x$ and the channel state $S=s$, for $x\in \mathcal{X}, s\in \mathcal{S}, y\in \mathcal{Y}$. $\mP_{\!S}(s)$ specifies the probability of the channel state being $S=s$ for $s\in \mathcal{S}$. 
The channel is memoryless and the state is i.i.d. across channel uses. Specifically, for $n$ uses of the channel, $X^n, S^n, Y^n$ collectively denote the inputs, states, and the outputs corresponding to the $n$ channel uses, respectively. The probability distribution of $S^n$ is $\mP_{\!S^n}(s^n)= \mP_{\!S}^{\otimes n}(s^n) \triangleq \prod_{i=1}^n \mP_{\!S}(s_i)$. The channel's conditional distribution for $n$ uses of the channel is $\mN^{\otimes n}(y^n\mid x^n, s^n) \triangleq \prod_{i=1}^n \mN (y_i\mid x_i,s_i)$, for $s^n\in \mathcal{S}^n$, $x^n\in \mathcal{X}^n$, $y^n \in \mathcal{Y}^n$.

\section{NS-assisted coding with causal CSIT}
In this section, we consider the communication scenario where the transmitter and receiver are allowed to share in advance a \emph{non-signaling} resource, and the transmitter knows the channel state \emph{causally}, i.e., over each channel use the transmitter knows the past and present but not the future channel states. A message $W$ originates at the transmitter, is mapped to a sequence of $n$ symbols (a codeword) $X_1,X_2,\dots, X_n$ by an encoder utilizing the transmitter's side of the NS resource and causal CSIT, and the codeword symbols are input into the channel over $n$ channel uses. 
After $n$ channel uses, the  channel outputs $Y^n$ obtained by the receiver are mapped to $\widehat{W}$ by a decoder utilizing the receiver's side of the NS resource.

\subsection{Coding schemes} \label{sec:NSschemes}
Recall that the transmitter has $n$ sequential inputs, i.e., $(W, S_1), S_2, \ldots, S_n$, and the receiver has a single collective input, i.e., $Y^n$. We model a NS-assisted coding scheme with causal CSIT by a sequential non-signaling correlation (Section \ref{sec:SNSC}). 
Specifically, a non-signaling coding scheme with causal CSIT is specified by $(M,n)$, and a sequential non-signaling correlation 
\begin{align*}
	\mZ^{\NS,\ca} \in \mathcal{P}(\mathcal{X}^n \times [M] \mid [M]\times \mathcal{S}^n \times \mathcal{Y}^n),
\end{align*}
with $\mZ^{\NS,\ca}(x^n, \widehat{w} \mid w, s^n, y^n)$ specifying the probability of $X^n=x^n, \widehat{W}=\widehat{w}$ given $W=w, S^n=s^n, Y^n=y^n$. 
Fig. \ref{fig:scheme} illustrates a NS-assisted coding scheme with causal CSIT.

\begin{figure}[htbp]
\centering
\begin{tikzpicture}
\def\w{0.9}
\def\h{0.3}
\def\hh{0.9}
\fill[gray!20] (0,0) -- (1*\w,0) -- (1*\w,-\h) -- (2*\w,-\h) -- (2*\w,-2*\h) -- (2*\w+0.25,-2*\h)
decorate[decoration={zigzag, amplitude=2pt, segment length=6pt}] { -- (2*\w+0.75,-2*\h) }
-- (3*\w,-2*\h) -- (3*\w,-3*\h) -- (4*\w,-3*\h) -- (4*\w, 0) -- (6.5*\w, 0)
-- (6.5*\w, -3*\h-\hh) -- (2*\w+0.75, -3*\h-\hh)
decorate[decoration={zigzag, amplitude=2pt, segment length=6pt}] { -- (2*\w+0.25, -3*\h-\hh) }
-- (0*\w, -3*\h-\hh) -- cycle;
\draw[thick] (0,0) -- (1*\w,0) -- (1*\w,-\h) -- (2*\w,-\h) -- (2*\w,-2*\h) -- (2*\w+0.25,-2*\h)
decorate[decoration={zigzag, amplitude=2pt, segment length=6pt}] { -- (2*\w+0.75,-2*\h) }
-- (3*\w,-2*\h) -- (3*\w,-3*\h) -- (4*\w,-3*\h) -- (4*\w, 0) -- (6.5*\w, 0)
-- (6.5*\w, -3*\h-\hh) -- (2*\w+0.75, -3*\h-\hh)
decorate[decoration={zigzag, amplitude=2pt, segment length=6pt}] { -- (2*\w+0.25, -3*\h-\hh) }
-- (0*\w, -3*\h-\hh) -- cycle;

\draw[thick,gray] (1*\w,-\h) -- (1*\w,-3*\h-\hh);
\draw[thick,gray] (2*\w,-2*\h) -- (2*\w,-3*\h-\hh);
\draw[thick,gray] (3*\w,-3*\h) -- (3*\w,-3*\h-\hh);
\draw[thick,gray] (4*\w,-3*\h) -- (4*\w,-3*\h-\hh);

\node[] (W) at (-0.75,-0.75) {\small $W$};
\draw[-{latex}, thick] (W) -- ($(W.east)+(0.45,0)$);
\node[rectangle, draw,  thick, fill=black!10] at (-1.2,-2.5) (PS1) {$\mP_{\!S}$};
\node[rectangle, draw, thick, fill=black!10] at (-1.2,-3.25) (PS2) {$\mP_{\!S}$};
\node at (-1.2,-3.75) {\small $\vdots$};
\node[rectangle, draw, thick, fill=black!10] at (-1.2,-4.5) (PS3) {$\mP_{\!S}$};

\node[rectangle, draw,  thick, minimum width = 0.75cm, minimum height=0.6cm, fill=black!10] at (3.8,-2.4) (N1) {$\mN$};
\node[rectangle, draw, thick, minimum width = 0.75cm, minimum height=0.6cm, fill=black!10] at (3.8,-3.15) (N2) {$\mN$};
\node at (3.8,-3.68) {\small $\vdots$};
\node[rectangle,  draw, thick, minimum width = 0.75cm, minimum height=0.6cm, fill=black!10] at (3.8,-4.4) (N3) {$\mN$};

\draw [thick, -{latex}] (PS1.north) -- ($(PS1.north) + (0,2.75)$) -- ($(PS1.north) + (1.65,2.75)$) -- ($(PS1.north) + (1.65,2.25)$);

\draw [thick, -{latex}] (PS2.west) -- ($(PS2.west) + (-0.3,0)$) -- ($(PS2.west) + (-0.3,4)$) -- ($(PS2.west) + (2.9,4)$) -- ($(PS2.west) + (2.9,3)$);

\draw [thick, -{latex}] (PS3.west) -- ($(PS3.west) + (-1,0)$) -- ($(PS3.west) + (-1,5.6)$) -- ($(PS3.west) + (4.7,5.6)$) -- ($(PS3.west) + (4.7,3.65)$);

\draw [thick, -{latex}] (PS1.east) -- ($(PS1.east)+(4.28,0)$);
\draw [thick, -{latex}] (PS2.east) -- ($(PS2.east)+(4.28,0)$);
\draw [thick, -{latex}] (PS3.east) -- ($(PS3.east)+(4.28,0)$);

\draw [thick, -{latex}] (0.5*\w, -3*\h-\hh) -- (0.5*\w, -3*\h-\hh-0.5) -- (0.5*\w+2.98, -3*\h-\hh-0.5); 

\draw [thick,  -] (1.5*\w, -3*\h-\hh) -- (1.5*\w, -3*\h-\hh-0.4);
\draw[thick] (1.5*\w, -3*\h-\hh-0.4)
    arc[start angle=135, end angle=225, radius=0.3];
\draw[thick, -{latex}] (1.5*\w, -3*\h-\hh-0.82) -- (1.5*\w, -3*\h-\hh-1.25) -- (1.5*\w+2.07, -3*\h-\hh-1.25);

\draw [thick] (3.5*\w, -3*\h-\hh) -- (3.5*\w, -3*\h-\hh-0.4);
\draw[thick] (3.5*\w, -3*\h-\hh-0.4)
    arc[start angle=150, end angle=210, radius=1.15];
\draw[thick, -{latex}] (3.5*\w, -3*\h-\hh-1.55) -- (3.5*\w, -3*\h-\hh-2.5) -- (3.5*\w+0.28, -3*\h-\hh-2.5);

\draw[thick, -{latex}] (N1.east) -- ($(N1.east)+(0.25,0)$) -- ($(N1.east)+(0.25,0.6)$);

\draw[thick, -{latex}] (N2.east) -- ($(N2.east)+(0.65,0)$) -- ($(N2.east)+(0.65,1.34)$);

\draw[thick, -{latex}] (N3.east) -- ($(N3.east)+(1.35,0)$) -- ($(N3.east)+(1.35,2.59)$);

\node (Wh) at (4.75,0.75) {\small $\widehat{W}$};
\draw[thick, -{latex}] ($(Wh.south)+(0,-0.44)$)--(Wh.south);

\node at (0.23,-2.1) {\small $X_1$};
\node at (1.1,-2.8) {\small $X_2$};
\node at (2.2,-2.1) {\small $\cdots$};
\node at (2.9,-3.75) {\small $X_n$};

\node at (0.7,0.3) {\small $S_1$};
\node at (1.57,0.3) {\small $S_2$};
\node at (2.25,0.3) {\small $\cdots$};
\node at (2.95,0.3) {\small $S_n$};

\node at (-0.5,-2.35) {\small $S_1$};
\node at (-0.5,-3.1) {\small $S_2$};
\node at (-0.5,-4.35) {\small $S_n$};

\node at (4.4,-2.6) {\scriptsize $Y_1$};
\node at (5.05,-2.6) {\scriptsize $Y_2$};
\node at (5.2,-2.2) {\scriptsize $\cdots$};
\node at (5.75,-2.6) {\scriptsize $Y_n$};

\node [fill=black!15, minimum width=5.2cm] at (3,-1.35) {\small $\mZ^{\NS,\ca}\big( x^n,\widehat{w} \;\big|\; [w,s^n],y^n \big)$};

\node[] at (5.4,0.3) {\sc \tiny Receiver};
\node[] at (-0.4,0.2) {\sc \tiny Transmitter};
\node[] at (3.8,-4.9) {\sc \tiny Channel};
\node[] at (-1.2,-4.9) {\sc \tiny State};

\end{tikzpicture}
\caption{NS-assisted coding scheme with causal CSIT}
\label{fig:scheme}
\end{figure}

The non-signaling and causal-CSIT constraints for $\mZ^{\NS,\ca}$ are captured by the following $3$ conditions, namely, \textit{C1}--\textit{C3}.
\begin{enumerate}[label=\textit{C\arabic*}:]
	\item The marginal probability $\mZ^{\NS,\ca}(x^n\mid w,s^n,y^n)$ is invariant under changes of $y^n$.
	\begin{equation}
	\begin{aligned}
	&\mZ^{\NS,\ca}(x^n\mid w,s^n,y^n)=\mZ^{\NS,\ca}(x^n\mid w,s^n,{y'}^n) \\
	&\forall (x^n,w,s^n,y^n, {y'}^n)\in\mathcal{X}^n\times[M]\times\mathcal{S}^n\times\mathcal{Y}^n\times\mathcal{Y}^n
	\end{aligned}
	\end{equation}
	\item The marginal probability $\mZ^{\NS,\ca}(\widehat{w}\mid w,s^n,y^n)$ is invariant under changes of $(w, s^n)$.
	\begin{equation}
	\begin{aligned}	
	&\mZ^{\NS,\ca}(\widehat{w}\mid w,s^n,y^n)=\mZ^{\NS,\ca}(\widehat{w}\mid w',{s'}^n,y^n)\\
	&\forall (\widehat{w},w,w',s^n,{s'}^n,y^n)\in[M]\times[M]\times[M]\times\mathcal{S}^n\times\mathcal{S}^n\times\mathcal{Y}^n
	\end{aligned}
	\end{equation}
	\item For $i\in [n-1]$, the marginal probability $\mZ^{\NS, \ca}(x^i, \widehat{w} \mid w, s^n, y^n)$ is invariant under changes of $s_{i+1}^n$.
	\begin{equation}
	\begin{aligned}
	&\mZ^{\NS, \ca}(x^i, \widehat{w} \mid w, s^i,s_{i+1}^n, y^n)=\mZ^{\NS, \ca}(x^i, \widehat{w} \mid w, s^i, {s'}_{i+1}^n, y^n)\\
	&\forall (i,x^i,\widehat{w},w,s^n,{s'}_{i+1}^n,y^n)\in[n-1]\times\mathcal{X}^i\times[M]\times[M]\times\mathcal{S}^n\times\mathcal{S}^{n-i}\times\mathcal{Y}^n
	\end{aligned}
	\end{equation}
\end{enumerate}
Condition \textit{C1} says that using the box only, the transmitter should not infer any information about the input at the receiver, i.e., $Y^n$. 
Condition \textit{C2} says that using the box only  (without the channel), the receiver should not infer any information about the input at the transmitter, i.e., $(W,S^n)$.
Condition \textit{C3}  says that, using the box only, the transmitter (up to the $i^{th}$ time slot), even if collaborating with the receiver, should not infer any information about the future inputs of the channel states, i.e., $S_{i+1}^n$. 
These $3$ conditions make sure that the box obeys the non-signaling and the causal CSIT assumptions.

Note that \textit{C1} and \textit{C3} together imply that the marginal probability $\mZ^{\NS,\ca}(x^i\mid w, s^n, y^n)$ is invariant under changes of $(s_{i+1}^n, y^n)$, for each $i\in [n-1]$.  In other words, using the box only, the transmitter (up to the $i^{th}$ time slot) should not infer any information about the input at the receiver, or any information about the future inputs of the channel states, i.e., $(S_{i+1}^n, Y^n)$. Following this, we write $\mZ^{\NS,\ca}(x^i\mid w,s^i)$ for the conditional probability of the scheme producing $X^i=x^i$ given $W=w$ and $S^i=s^i$, since this does not depend on $(s_{i+1}^n, y^n)$. It then follows that $\mZ^{\NS,\ca}$ admits the following factorization
\begin{align}
	&\mZ^{\NS, \ca} (x^n,\widehat{w} \mid w,s^n,y^n)\notag \\
	&= \Bigg(\prod_{i=1}^n \mZ^{\NS, \ca}(x_i\mid x^{i-1}, w,s^i)\Bigg) \mZ^{\NS, \ca}(\widehat{w} \mid x^n,w,s^n, y^n)
\end{align}

The coding scheme operates in the following manner. $(W,S^n)$ are generated first. At the first channel use, the transmitter provides $(W,S_1)$ to the scheme, which produces $X_1$ according to $\mZ^{\NS, \ca}(x_1\mid w, s_1)$. $X_1$ is sent though the channel which produces $Y_1$ according to $\mN(y_1\mid x_1,s_1)$. At the $i=2,\ldots, n$ channel use, the transmitter provides $S_i$ to the scheme, which produces $X_i$ according to $\mZ^{\NS,\ca}(x_i\mid x^{i-1}, w,s^i)$. $X_i$ is sent through the channel which produces $Y_i$ according to $\mN(y_i\mid x_i,s_i)$. After $n$ uses of the channel, the receiver provides $Y^n$ to the scheme, which produces the decoded message $\widehat{W}$ according to $\mZ^{\NS, \ca}(\widehat{w} \mid x^n,w,s^n, y^n)$.

Let $\sfp(w,x^n,y^n,s^n,\widehat{w}) = \Pr(W=w, S^n=s^n, X^n=x^n, Y^n=y^n,  \widehat{W}=\widehat{w})$ denote the joint distribution of $(W,S^n,X^n,Y^n, \widehat{W})$. It is defined as follows.
\begin{align}
	&\sfp(w,s^n,x^n,y^n,\widehat{w}) \notag \\
	&= \sfp(w,s^n) \Bigg(\prod_{i=1}^n  \sfp(x_i\mid y^{i-1}, x^{i-1}, s^n,w) \sfp(y_i\mid y^{i-1},x^i,s^n,w) \Bigg) \sfp(\widehat{w} \mid y^n,x^n,s^n,w)\\
	&= \frac{1}{M} \mP_{\!S}^{\otimes n}(s^n) \Bigg(\prod_{i=1}^n \mZ^{\NS, \ca}(x_i\mid x^{i-1}, w,s^i) \mN(y_i\mid x_i, s_i) \Bigg) \mZ^{\NS, \ca}(\widehat{w} \mid x^n,w,s^n, y^n) \\
	&=\frac{1}{M}\Bigg(\prod_{i=1}^n \mP_{\! S}(s_i)  \mN(y_i\mid x_i,s_i) \Bigg) \mZ^{\NS, \ca} (x^n,\widehat{w} \mid w,s^n,y^n) \label{eq:joint_prob_causal}
\end{align}
The form of the joint distribution in \eqref{eq:joint_prob_causal} resembles the joint distribution for other NS-assisted coding schemes, e.g., \cite{matthews2012linear, fawzi2024MAC, fawzi2024broadcast, Yao_Jafar_CSITTP}.

We use
\begin{equation}  \label{eq:prob_succ_NS}
	\eta(\mZ^{\NS,\ca}) = \sum_{w,x^n,s^n,y^n} \sfp(w,s^n,x^n,y^n,w)
\end{equation} 
to denote the probability of success associated with $\mZ^{\NS,\ca}$.
Let $\mathcal{Z}^{\NS,\ca}(M,n)$ denote the set of NS-assisted coding schemes with message size $M$ and blocklength $n$.

\begin{remark}[Non-causal CSIT]
	The form of the joint distribution in \eqref{eq:joint_prob_causal} coincides with the joint distribution for a NS-assisted coding scheme with non-causal CSIT \cite{Yao_Jafar_CSITTP}, with the only difference here that $\mZ^{\NS,\ca}$ belongs to $\mathcal{Z}^{\NS,\ca}$ which satisfies all of \textit{C1}--\textit{C3}. The conditions for \emph{causal} CSIT are explicitly those in \textit{C3}. Removing \textit{C3} gives the definition of NS-assisted coding schemes with non-causal CSIT.   We denote the set of NS-assisted coding schemes with non-causal CSIT as $\mathcal{Z}^{\NS,\nc}$.
\end{remark}

\begin{remark}[Classical coding schemes]
	Any classical coding scheme is a special case of a non-signaling coding scheme. The class of classical coding schemes with causal CSIT, denoted as $\mathcal{Z}^{\C,\ca}$, is the subset of $\mathcal{Z}^{\NS, \ca}$ consisting of those $\mZ(x^n,\hat{w}\mid w,s^n,y^n)$ that admit a factorization of the form
	\begin{align*}
		\mZ(x^n,\hat{w}\mid w,s^n,y^n)=\sum_{\ell =1}^L \lambda_\ell \mE_{\ell}(x^n\mid w,s^n) \mD_{\ell}(\widehat{w}\mid y^n),
	\end{align*}
	where $\sum_{\ell =1}^L \lambda_\ell = 1, \lambda_\ell \geq 0,  \mE_{\ell} \in \mathcal{P}(\mathcal{X}^n \mid [M] \times \mathcal{S}^n)$ and $\mD_{\ell} \in \mathcal{P}([M]\mid \mathcal{Y}^n)$ for every $\ell \in [L]$. The class of classical coding schemes with non-causal CSIT, denoted by $\mathcal{Z}^{\C,\nc}$, is defined similarly.
\end{remark}

\subsection{Rate and capacity} \label{sec:def_rate_capacity}
A (communication) rate $R\in \mathbb{R}_{\geq 0}$ is said to be achievable by classical (resp. NS-assisted) coding schemes with causal (resp. non-causal) CSIT if (and only if) there exists a sequence of coding schemes $\{\mZ_n\}_{n\in \mathbb{N}}$ from the corresponding class such that
\begin{align}
	\lim_{n\to \infty} \eta(\mZ_n) = 1 ~~\mbox{and} ~~\lim_{n\to \infty} \frac{\log_2 M_n}{n} \geq R.
\end{align}
For a channel with state $(\mN, \mP_{\!S})$, let $C^{\C,\ca}(\mN, \mP_{\!S}), C^{\C,\nc}(\mN, \mP_{\!S}), C^{\NS,\ca}(\mN, \mP_{\!S})$, and $C^{\NS,\nc}(\mN, \mP_{\!S})$ denote the classical capacity with causal CSIT, the classical capacity with non-causal CSIT, the NS-assisted capacity with causal CSIT, and the NS-assisted capacity with non-causal CSIT, respectively.

\subsection{Optimal probability of successful decoding}
For $M,n\in \mathbb{N}$, the optimal probabilities of successful decoding (or the probabilities of success in short) for the classical/NS-assisted coding schemes with causal/non-causal CSIT are defined as follows.
\begin{equation}
\begin{aligned}
{\eta}^{\C, \ca}_{\opt,M,n} \triangleq \sup_{\mZ \in \mathcal{Z}^{\C, \ca}(M,n)} \eta(\mZ),&& {\eta}^{\C, \nc}_{\opt,M,n} \triangleq \sup_{\mZ \in \mathcal{Z}^{\C, \nc}(M,n)} \eta(\mZ),\\
{\eta}^{\NS, \ca}_{\opt,M,n} \triangleq \sup_{\mZ \in \mathcal{Z}^{\NS, \ca}(M,n)} \eta(\mZ),&&{\eta}^{\NS, \nc}_{\opt,M,n} \triangleq \sup_{\mZ \in \mathcal{Z}^{\NS, \nc}(M,n)} \eta(\mZ).
\end{aligned}
\end{equation}

\subsection{Results for NS-Assisted Coding with Causal CSIT} \label{sec:results}
Our main result, presented in the following theorem, is a characterization of the capacity of NS-assisted coding schemes with causal CSIT for any channel with state $(\mN , \mP_{\!S})$. 
\begin{theorem} \label{thm:NS_capacity}
	With causal CSIT, the NS-assisted capacity for the channel with state $(\mN , \mP_{\!S})$ is 
	\begin{align} \label{eq:NS_capacity}
		C^{\NS,\ca}(\mN , \mP_{\!S}) = \max_{\mP_{\!X\mid S}} I(X;Y\mid S),
	\end{align}
	where the maximization is over all $\mP_{\! X\mid S}\in \mathcal{P}(\mathcal{X}\mid \mathcal{S})$ such that $(S,X,Y)\sim  \mP_{\!S}(s) \mP_{\!X\mid S}(x\mid s) \mN(y\mid x,s)$.
\end{theorem}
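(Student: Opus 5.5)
\emph{Converse.} I would obtain the upper bound directly from the non-causal result. By the remark on non-causal CSIT, every $\mZ\in\mathcal{Z}^{\NS,\ca}(M,n)$ satisfies \textit{C1}--\textit{C2} and gives the same joint distribution \eqref{eq:joint_prob_causal}. Hence $\mathcal{Z}^{\NS,\ca}(M,n)\subseteq\mathcal{Z}^{\NS,\nc}(M,n)$ and $\eta^{\NS,\ca}_{\opt,M,n}\le \eta^{\NS,\nc}_{\opt,M,n}$ for all $M,n$. Invoking \cite[Thm. 2]{Yao_Jafar_CSITTP}, $C^{\NS,\ca}(\mN,\mP_{\!S})\le C^{\NS,\nc}(\mN,\mP_{\!S})=\max_{\mP_{\!X|S}}I(X;Y\mid S)$. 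All the real work is therefore in achievability.

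\emph{Achievability: box construction.} Fix the maximizing $\mathsf{Q}=\mP_{\!X|S}$ and let $\mP_{\!Y|S}(y\mid s)=\sum_x \mathsf{Q}(x\mid s)\mN(y\mid x,s)$. I would build a Matthews-type box \cite{matthews2012linear} of the form
\begin{align*}
\mZ(x^n,\widehat w\mid w,s^n,y^n)=\mathsf{Q}^{\otimes n}(x^n\mid s^n)\Big(\mathbb{I}[\widehat w=w]\,t(x^n,s^n,y^n)+\mathbb{I}[\widehat w\neq w]\tfrac{1-t(x^n,s^n,y^n)}{M-1}\Big),
\end{align*}
with an acceptance function $0\le t\le 1$. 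Summing over $\widehat w$ leaves $\mathsf{Q}^{\otimes n}(x^n\mid s^n)$, which is independent of $y^n$ and causal in $s^n$. So \textit{C1} holds, and the $x^i$-marginal is compatible with \textit{C3}.

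\emph{Conditions on $t$.} Condition \textit{C2} requires $\widehat W$ to be uniform for every $(w,s^n,y^n)$. This holds iff
\begin{align*}
\sum_{x^n}\mathsf{Q}^{\otimes n}(x^n\mid s^n)\,t(x^n,s^n,y^n)=1/M .
\end{align*}
Condition \textit{C3} is more demanding: for each $i$, $\sum_{x_{i+1}^n}\mathsf{Q}(x_{i+1}^n\mid s_{i+1}^n)\,t(x^n,s^n,y^n)$ must not depend on $s_{i+1}^n$. My plan is to meet both conditions with a product form $t=\frac1M\prod_{j=1}^n r(x_j,s_j,y_j)$, where each factor is normalized per letter: $\sum_x \mathsf{Q}(x\mid s)\,r(x,s,y)=1$ for all $(s,y)$. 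The natural choice is $r=\mN(y\mid x,s)/\mP_{\!Y|S}(y\mid s)$, or a tilted version $r\propto(\mN/\mP_{\!Y|S})^{\lambda}$ renormalized per letter. Summing out the future letters then gives exactly $1$, so \textit{C2} and \textit{C3} hold simultaneously. With $r=\mN/\mP_{\!Y|S}$, the success probability becomes
\begin{align*}
\Pr\Big[\tfrac1M 2^{\imath(X^n;Y^n\mid S^n)}\ \text{(truncated at 1)}\Big],
\end{align*}
where $\imath$ is the conditional information density. By the law of large numbers this tends to $1$ whenever $\frac1n\log_2 M<I(X;Y\mid S)$.

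\emph{Main obstacle.} The step I expect to be hardest is the constraint $t\le1$. Truncating $\frac1M\prod_j r_j$ at $1$ destroys the exact normalizations behind \textit{C2} and \textit{C3}, because the truncation is not a per-letter operation. I would first try to exploit the slack: replace $M$ by $M'=M2^{n\delta}$ in the acceptance function so that truncation occurs only on an atypical set of vanishing probability. I would then repair the lost mass with a correction term that is itself built causally, time step by time step, from quantities independent of future states. If that fails, I would partition the blocklength into short sub-blocks, use bounded per-letter ratios (the tilted $r$ with small $\lambda$ is bounded), and combine the sub-blocks with a concatenation argument that preserves the sequential non-signaling structure.
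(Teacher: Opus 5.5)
Your converse is the paper's: the inclusion $\mathcal{Z}^{\NS,\ca}(M,n)\subseteq\mathcal{Z}^{\NS,\nc}(M,n)$ together with \cite[Thm.~2]{Yao_Jafar_CSITTP}. The achievability has a genuine gap, because no valid box is ever produced.

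Your product form $t=\frac1M\prod_j r(x_j,s_j,y_j)$ meets \textit{C2}--\textit{C3} exactly only because it is allowed to exceed $1$. Enforcing $t\le 1$ without truncation requires $M\ge c^n$, where $c=\max r$. Then $\eta=(\mathbb{E}_{\sfp}[r])^n/M\le(\mathbb{E}_{\sfp}[r]/c)^n\to 0$ except in degenerate cases. Tilting or short sub-blocks only move this problem to the block level.

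Truncation cannot be treated as a small perturbation either. Your success argument relies on $t$ saturating at $1$ on the $\sfp$-typical set, so truncation is active exactly where the channel law concentrates. Worse, since $\mathsf{Q}^{\otimes n}(x^n\mid s^n)\prod_j r_j=\sfp(x^n\mid s^n,y^n)$, the truncated acceptance probability for fixed $(s^n,y^n)$ equals $\mathbb{E}_{\sfp(\cdot\mid s^n,y^n)}\big[\min\{2^{-\imath},1/M'\}\big]$. Two consequences follow:
\begin{itemize}
\item For every typical $(s^n,y^n)$ this is exponentially smaller than $1/M'$ once $\frac1n\log_2 M'<I(X;Y\mid S)$.
\item It varies by exponential factors across $y^n$. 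For example, in the $Z_0/Z_1$ channel with $y_j=1-s_j$ it equals $\min\{\mathsf{Q}^{\otimes n}(y^n\mid s^n),1/M'\}$.
\end{itemize}
So after truncation you essentially have a threshold test. Its acceptance probability depends on $(s^n,y^n)$ and, given a prefix $x^i$, on $s_{i+1}^n$. \textit{C2} and \textit{C3} are pointwise identities that must hold for every $(s^n,y^n)$ and every prefix, however unlikely under the channel. Restoring all of them at once is the entire difficulty, and your ``correction term'' is left unspecified.

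The paper resolves exactly this point by making a threshold-type test's acceptance probability invariant by construction:
\begin{itemize}
\item It uses a $\{0,\lambda\}$-valued joint-typicality test, so $t\le1$ is automatic.
\item Algorithm~\ref{alg:type_mapping} causally relabels $s^n$ into $\tilde s^n$ of a fixed type, and $X_i$ is drawn from $\mP_{\!X|S}(\cdot\mid\tilde s_i)$ rather than from the true $s_i$.
\item $y^n$ is relabeled state by state into a fixed-type $\tilde y^n$.
\end{itemize}
Given any prefix, the future $(\tilde s_{i+1}^n,\tilde y_{i+1}^n)$ is then a permutation of a fixed multiset. A permutation argument gives \textit{C3}. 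For $i=0$ it gives a constant, which is set to exactly $1/M$ via $M=\lceil\mu\rceil$, $\lambda=\mu/M$.

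Alternatively, your causal-correction idea can be realized by capping letter by letter instead of at the end. Assume $c>1$, since otherwise $I(X;Y\mid S)=0$. Set
\[
L_0=1,\qquad L_j=L_{j-1}\big(1+\alpha_j(r_j-1)\big),\qquad \alpha_j=\min\{1,(M/L_{j-1}-1)/(c-1)\},\qquad t=L_n/M .
\]
Then $0\le L_j\le M$, and $\sum_{x_j}\mathsf{Q}(x_j\mid s_j)L_j=L_{j-1}$ for every $(s_j,y_j)$. Hence $\mathbb{E}[t\mid x^i]=L_i/M$ depends only on $(x^i,s^i,y^i)$, and \textit{C1}--\textit{C3} hold exactly. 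You would still have to prove $L_n/M\to 1$ under the channel law when $\frac1n\log_2 M<I(X;Y\mid S)$. A workable route: $\log L_j$ has drift $I(X;Y\mid S)$ while uncapped, and once $L_{j-1}>M/c$ each step jumps to the absorbing value $M$ with probability $\sfp(r=c)>0$. Neither such a construction nor its analysis is in your proposal.
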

\noindent Note that in \cite{Yao_Jafar_CSITTP} it is proved that $C^{\NS,\nc}(\mN , \mP_{\!S}) = \max_{\mP_{\!X\mid S}} I(X;Y\mid S)$. Since non-causal CSIT cannot be worse than causal CSIT,  we have $C^{\NS, \ca}(\mN , \mP_{\!S}) \leq C^{\NS, \nc}(\mN , \mP_{\!S})$ for any channel with state. Therefore, to prove Theorem \ref{thm:NS_capacity}, we only need to prove the achievability, i.e., $C^{\NS, \ca}(\mN , \mP_{\!S}) \geq \max_{\mP_{\!X\mid S}} I(X;Y\mid S)$. This proof is presented in Section \ref{proof:NS_capacity}. The proof requires a  novel construction of a set of NS-assisted coding schemes which satisfies both the non-signaling and the causality conditions. In particular, our coding schemes first transform each state sequence $(S_1,S_2,\ldots, S_n)$ to a state sequence $(\tilde{S}_1,\tilde{S}_2,\ldots, \tilde{S}_n)$ (processing in a causal order), and guarantee that $\tilde{S}^n$ will always have the same \emph{type} for all $S^n$. The schemes generate $X_i$ according to a distribution that depends only on $\tilde{S}_i$. The schemes also use an \emph{authentication} process inspired by the twirling steps used in the NS-assisted coding literature (e.g.,\cite{cubitt2011zero, matthews2012linear, fawzi2024MAC, fawzi2024broadcast}). The authentication process checks whether a given $y^n$ at the receiver satisfies certain joint typicality conditions with $X^n$, based on which it provides  either the correct message, or an incorrect message to the receiver. A toy example under an alternative (simplified) problem formulation is provided in Appendix \ref{sec:toyexample} to illustrate the \emph{authentication} aspect.

The following observations center around Theorem \ref{thm:NS_capacity},
 together with the results noted in Table \ref{tab:capacity_comparison_full}.
\begin{enumerate}[label=\textit{O\arabic*:}]
	\item Under NS assistance, having causal CSIT is sufficient to achieve the same capacity as with non-causal CSIT. However, having only strictly causal CSIT can only achieve the same capacity as with no CSIT.  Effectively, under NS assistance, the availability of \emph{current} CSIT is as helpful as non-causal CSIT in terms of the channel capacity. In contrast, classically, i.e., without NS assistance, causal CSIT is generally weaker than non-causal CSIT in terms of the channel capacity.
	\item Let $(\mN^{\otimes k}, \mP_{\!S}^{\otimes k})$ denote a block of $k$ parallel uses of the channel with state $(\mN, \mP_{\!S})$. In the classical case, the capacity for the block channel is then $C^{\C,\ca}(\mN^{\otimes k}, \mP_{\!S}^{\otimes k}) = \max_{\mP_U, x^k(u,s^k)}I(U;Y^k)$. By definition, in the limit of $k$, this capacity normalized by $k$ is, $\lim_{k\to\infty}\frac{1}{k}C^{\C,\ca}(\mN^{\otimes k},\mP_{\!S}^{\otimes k}) = C^{\C,\nc}(\mN,\mP_{\!S})$. For channels where $C^{\C,\ca}(\mN,\mP_{\!S}) < C^{\C,\nc}(\mN,\mP_{\!S})$, we have that $C^{\C,\ca}(\mN,\mP_{\!S})$ is superadditive, meaning that $C^{\C,\ca}(\mN^{\otimes k},\mP_{\!S}^{\otimes k}) > kC^{\C,\ca}(\mN,\mP_{\!S})$ for some $k$ (in fact for all sufficiently large $k$). In contrast, in the NS-assisted case, since $C^{\NS, \ca}(\mN,\mP_{\!S}) = C^{\NS, \nc}(\mN,\mP_{\!S}) = \max_{\mP_{\!X\mid S}} I(X;Y\mid S)$ for every channel with state, $C^{\NS, \ca}(\mN,\mP_{\!S})$ is always additive, meaning that $C^{\NS,\ca}(\mN^{\otimes k},\mP_{\!S}^{\otimes k}) = kC^{\NS,\ca}(\mN,\mP_{\!S})$ for all $k$.
\end{enumerate}

Following \textit{O1}, it is natural to ask whether causal CSIT can achieve the same optimal \emph{probability of success} as non-causal CSIT, for any given blocklength $n$ and message size $M$, under NS assistance. We answer  this question  in the negative, with a corollary (Corollary \ref{cor:prob_nc_ca}) that follows from our study of the optimal probability of success for a channel with CSIR (channel state information at the receiver, cf. \cite[Sec. 7.4.1]{NIT}).
We define a channel $\mN^{\CSIR}$ as a channel obtained from $\mN$ by providing the state explicitly to the receiver. See Definition \ref{def:CSIR}.

\begin{definition}[Channel with CSIR] \label{def:CSIR}
Given a channel with state $(\mN, \mP_{\!S})$, the associated channel with CSIR, denoted $(\mN^{\CSIR},\mP_{\!S})$, which lies in the framework of channels with state, is defined such that
\begin{align}
	\mN^{\CSIR}([y,s^{\R}] \mid x,s) = \mN(y\mid x,s) \times \mathbb{I}[s^{\R}=s],
\end{align}
for all $x\in \mathcal{X}, s\in \mathcal{S},   (y,s^{\R}) \in \mathcal{Y} \times \mathcal{S}$.
\end{definition}
\noindent In other words, $\mN^{\CSIR}$  includes the state $S$ in the output to receiver, so that the output of $\mN^{\CSIR}$ is $[Y,S^{\R}]$ where $S^{\R}=S$. Unlike CSIT, where we define causal CSIT and non-causal CSIT, for CSIR there is no such distinction in terms of causality, as the receiver is allowed to decode the message after collecting the channel's outputs for all channel uses.

\begin{remark}[CSIR] \label{rem:CSIR}
For a channel with state $(\mN, \mP_{\!S})$, we note that the expression in \eqref{eq:NS_capacity} coincides with the classical capacity when both CSIT and CSIR are available \cite[Sec. 7.4.1]{NIT}, regardless of whether the CSIT is causal or non-causal. Specifically, the four quantities, $C^{\NS,\ca}(\mN, \mP_{\!S})$, $C^{\NS,\nc}(\mN, \mP_{\!S})$, $C^{\C,\ca}(\mN^{\CSIR}, \mP_{\!S})$ and $C^{\C,\nc}(\mN^{\CSIR}, \mP_{\!S})$, are all equal to the value given in \eqref{eq:NS_capacity}. 
\end{remark}

Our next theorem shifts the focus from the capacity to the optimal probability of success for any given blocklength $n$ and message size $M$. In particular, we explore whether CSIR can further increase the probability of success of NS-assisted coding schemes, provided only causal CSIT is available. 

Given a channel with state $(\mN, \mP_{\!S})$ and its associated channel with CSIR, $(\mN^{\CSIR}, \mP_{\!S})$, a known result \cite[Thm. 1]{Yao_Jafar_CSITTP} states that,
\begin{align} \label{eq:VSCSIT}
	{\eta}^{\NS,\nc}_{\opt,M,n}(\mN, \mP_{\!S}) = {\eta}^{\NS,\nc}_{\opt,M,n}(\mN^{\CSIR}, \mP_{\!S}), \forall M,n,
\end{align}
i.e., under NS assistance, the optimal probability of success with \emph{non-causal} CSIT is the same as that with \emph{non-causal} CSIT and CSIR. This is referred to as `virtual signaling of CSIT'  in \cite{Yao_Jafar_CSITTP}. Does the same relationship also hold when only \emph{causal} CSIT is available? Our next result shows that this is in general not true, by identifying a setting where ${\eta}^{\NS,\ca}_{\opt,M,n}(\mN, \mP_{\!S}) < {\eta}^{\NS,\ca}_{\opt,M,n}(\mN^{\CSIR}, \mP_{\!S})$.  Towards this end, let us now introduce the `$Z_0/Z_1$' channel.

\begin{figure}[h]
\begin{center}
\begin{tikzpicture}
\begin{scope}[local bounding box=myBox1]
\node (X) at (0,0) {\small $x$};
\node (N0) at (2,0){\small $\mN(y\mid x,s=0)$};
\node (Y) at (4,0) {\small $y$};
\node (X0) at (0,-0.5){\small $0$};
\node (X1) at (0,-1.5){\small $1$};
\node (Y0) at (4,-0.5){\small $0$};
\node (Y1) at (4,-1.5){\small $1$};
\draw (X0)--(Y0) node [pos=0.25, below]{\small$1$};
\draw (X1)--(Y0) node [pos=0.75, below]{\small$0.5$};
\draw (X1)--(Y1) node [pos=0.75, below]{\small$0.5$};
\end{scope}
\draw[help lines] ($(myBox1.south west)+(-0.2,-0.2)$) rectangle ($(myBox1.north east)+(0.2,0.2)$);
\node at (2,-2.5) {\small $Z_0$};

\begin{scope}[shift={(5,0)}]
\begin{scope}[local bounding box=myBox2]
\node (X) at (0,0) {\small $x$};
\node (N0) at (2,0){\small $\mN(y\mid x,s=1)$};
\node (Y) at (4,0) {\small $y$};
\node (X0) at (0,-0.5){\small $0$};
\node (X1) at (0,-1.5){\small $1$};
\node (Y0) at (4,-0.5){\small $0$};
\node (Y1) at (4,-1.5){\small $1$};
\draw (X0)--(Y0) node [pos=0.75, below]{\small$0.5$};
\draw (X0)--(Y1) node [pos=0.25, below]{\small$0.5$};
\draw (X1)--(Y1) node [pos=0.25, below]{\small$1$};
\end{scope}
\draw[help lines] ($(myBox2.south west)+(-0.2,-0.2)$) rectangle ($(myBox2.north east)+(0.2,0.2)$);
\node at (2,-2.5) {\small $Z_1$};
\end{scope}
\end{tikzpicture}
\caption{The $Z_0/Z_1$ channel acts as the channel $Z_0$ when the state is $s=0$, and as the channel $Z_1$ when the state is $s=1$. The state is assumed equally likely to be $0$ or $1$.}
\end{center}
\end{figure}

\begin{definition}[$Z_0/Z_1$ channel]\label{def:Z0Z1}
	The $Z_0/Z_1$ channel is defined by $\mathcal{Y}=\mathcal{X}=\mathcal{S}=\{0,1\}$,
	\begin{equation}
	\begin{aligned}
		&\mN(0\mid 0,0) = 1, ~ \mN(1\mid 0,0) = 0, ~ \mN(0\mid 1,0) = \mN(1\mid 1,0) = 1/2, \\
		&\mN(1\mid 1,1) = 1, ~ \mN(0\mid 1,1) = 0, ~ \mN(0\mid 0,1) = \mN(1\mid 0,1) = 1/2,
	\end{aligned}
	\end{equation}
	and
	\begin{align}
		\mP_{\!S}(s)= 1/2, ~ \forall s\in \mathcal{S}.
	\end{align}
\end{definition}

\begin{theorem} \label{thm:Prob}
Let $(\mN, \mP_{\!S})$ be the $Z_0/Z_1$ channel. Then,
\begin{align}
	& {\eta}^{\NS,\ca}_{\opt,M=2,n=2}(\mN^{\CSIR}, \mP_{\!S}) \geq {\eta}^{\C,\ca}_{\opt,M=2,n=2}(\mN^{\CSIR}, \mP_{\!S}) \geq 7/8, \label{eq:Prob_1}\\
	& {\eta}^{\NS,\ca}_{\opt,M=2,n=2}(\mN, \mP_{\!S}) \leq 13/16 \label{eq:Prob_2}
\end{align}
\end{theorem}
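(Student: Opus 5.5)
The plan has two parts: an explicit classical scheme for \eqref{eq:Prob_1}, and a linear-programming relaxation with a dual certificate for \eqref{eq:Prob_2}.

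\textbf{Lower bound \eqref{eq:Prob_1}.} The first inequality in \eqref{eq:Prob_1} is immediate, since $\mathcal{Z}^{\C,\ca}\subseteq\mathcal{Z}^{\NS,\ca}$. For the second, I will exhibit a deterministic classical scheme with causal CSIT for $\mN^{\CSIR}$, $M=2$, $n=2$, using messages $w\in\{0,1\}$. The key property of the $Z_0/Z_1$ channel is that input $x=s$ yields $y=s$ deterministically, while input $x=1-s$ yields a uniform output. The scheme is:
\begin{itemize}
\item At time $1$, send $x_1=w$.
\item At time $2$, knowing $s_1,s_2$ causally, send $x_2=s_2$ if $w=s_1$, and $x_2=1-s_2$ if $w\neq s_1$.
\item The receiver, which sees $(y_1,y_2,s_1,s_2)$, decodes $\widehat w=1-s_1$ if $y_1\neq s_1$ or $y_2\neq s_2$, and $\widehat w=s_1$ otherwise.
\end{itemize}
If $w=s_1$, then $y_1=s_1$ and $y_2=s_2$ with probability $1$, so decoding is always correct. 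If $w\neq s_1$, the receiver observes $y_1=s_1$ and $y_2=s_2$ only with probability $\tfrac12\cdot\tfrac12=\tfrac14$, which is the only error event. The success probability is therefore $1-\tfrac12\cdot\tfrac14=7/8$. The essential point is that the second transmission adapts to whether $w=s_1$; this uses current-state knowledge and is verifiable only because the receiver knows $(s_1,s_2)$.

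\textbf{Upper bound \eqref{eq:Prob_2}.} For $\mN$ without CSIR, $\eta^{\NS,\ca}_{\opt,2,2}$ is the value of a finite linear program. The variables are $\mZ(x^2,\widehat w\mid w,s^2,y^2)\ge 0$, subject to normalization and the linear constraints \textit{C1}--\textit{C3}, and the objective is \eqref{eq:prob_succ_NS} computed via \eqref{eq:joint_prob_causal}. I will first reduce this LP by symmetrization. Averaging any feasible $\mZ$ over the symmetry group preserves feasibility and the objective, so an optimal symmetric solution exists. The group is generated by:
\begin{itemize}
\item relabeling messages (a permutation applied jointly to $w$ and $\widehat w$);
\item the channel symmetry that flips $x,y,s$ simultaneously in a given time slot (which maps $Z_0\leftrightarrow Z_1$ and preserves $\mP_{\!S}$).
\end{itemize}
In the reduced LP, as in \cite{matthews2012linear,Yao_Jafar_CSITTP}, the decoder variables collapse to a small number of free parameters indexed by orbits of $(x^2,s^2,y^2)$. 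The crucial extra constraint is \textit{C3} for $i=1$: the joint marginal of $(x_1,\widehat w)$ cannot depend on $s_2$. This is exactly what distinguishes the causal problem from the non-causal one, where \eqref{eq:VSCSIT} would instead give the CSIR value.

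\textbf{Main obstacle.} The hard step is certifying the value $13/16$. I would solve the reduced LP numerically to identify the optimum and its structure, then read off an explicit dual feasible solution whose dual objective equals $13/16$. Weak duality then yields \eqref{eq:Prob_2}, and checking dual feasibility is a finite verification over the orbits. If the reduced LP is still too large to handle by hand, I would fall back on a primal argument. Using \textit{C3}, write $\eta$ as an average over $s_1$ of contributions from $s_2=0$ and $s_2=1$ that share the same $(x_1,\widehat w)$-marginal. For each $s_2$, bound the conditional success by a single-shot NS bound for the second channel use, applied to an effective message revealed only partially by $y_1$. The risk in this fallback is that the bound may not come out tight at $13/16$, which is why the dual certificate is the primary route.
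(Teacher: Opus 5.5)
Your lower bound \eqref{eq:Prob_1} is correct. In the state-aligned coordinates $x\oplus s$, $y\oplus s$, where every slot becomes $Z_0$, your adaptive scheme is the paper's repetition code applied to the uniform bit $w\oplus s_1$. The receiver strips off $s_1$, and the scheme gives the same $7/8$.

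For \eqref{eq:Prob_2} there is a genuine gap. Your strategy is the paper's: write the causal NS linear program, symmetrize over message labels, and invoke weak duality. That strategy would succeed, but the proposal stops where the proof begins. For a numerical bound like $13/16$, an explicit dual-feasible point \emph{is} the proof. You only promise to find one by computer, so as written you establish no upper bound at all.

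The paper supplies the missing certificate in three steps:
\begin{enumerate}
\item Reduce to variables $r_{x^2,y^2,s^2}$ and $q_{x^2\mid s^2}$ with $\sum_{x^2}r=\tfrac12$ for each $(y^2,s^2)$, $\sum_{x^2}q=1$ for each $s^2$, and $r\le q$.
\item Relax by discarding the causality constraint on $q$, while keeping the one on $r$: $\sum_{x_2}r_{x_1,x_2,y_1,y_2,s_1,0}=\sum_{x_2}r_{x_1,x_2,y_1,y_2,s_1,1}$.
\item Exhibit explicit free multipliers $\lambda_{y_1y_2s_1s_2}$, $\mu_{s_1s_2}$, $\xi_{x_1y_1y_2s_1}$ and nonnegative $\eta$ satisfying
\begin{itemize}
\item the $64$ inequalities $\lambda+(-1)^{s_2}\xi+\eta\ge\tfrac14\mN(y_1\mid x_1,s_1)\mN(y_2\mid x_2,s_2)$, and
\item the $16$ inequalities $\mu_{s_1s_2}\ge\sum_{y_1,y_2}\eta_{x_1x_2y_1y_2s_1s_2}$.
\end{itemize}
The dual objective is then $\tfrac12\sum\lambda+\sum\mu=\tfrac{7}{16}+\tfrac{6}{16}=\tfrac{13}{16}$.
\end{enumerate}

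Two remarks should guide your search for such a point. First, $\xi$ cannot vanish. Without the $r$-causality constraint the program is the non-causal one, whose value is at least $7/8$ by \eqref{eq:VSCSIT} and \eqref{eq:Prob_1}. So any valid certificate must exploit \textit{C3} at $i=1$, consistent with your own observation. Second, you do not need the exact optimum, which the paper never determines. A certificate of value at most $13/16$ for any relaxation suffices.

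Your fallback primal argument is too vague to evaluate, and as you note it may not be tight.
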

\noindent The proof appears in Appendix \ref{proof:Prob}. Theorem \ref{thm:Prob} shows that the notion of `virtual signaling of CSIT' via NS assistance no longer holds for causal CSIT, because for the $Z_0/Z_1$ channel, CSIR can further improve the optimal probability of success beyond what is achievable with NS assistance. 

\begin{corollary} \label{cor:prob_nc_ca}
	Let $(\mN, \mP_{\!S})$ be the $Z_0/Z_1$ channel. Then,
	\begin{align}
		{\eta}^{\NS,\nc}_{\opt,M=2,n=2}(\mN, \mP_{\!S}) >  {\eta}^{\NS,\ca}_{\opt,M=2,n=2}(\mN, \mP_{\!S}).
	\end{align}
\end{corollary}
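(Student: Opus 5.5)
The corollary follows by chaining Theorem \ref{thm:Prob} with the virtual-signaling identity \eqref{eq:VSCSIT} and two monotonicity facts that hold directly from the definitions.

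First, I will argue that ${\eta}^{\NS,\nc}_{\opt,M,n}(\mN^{\CSIR},\mP_{\!S}) \geq {\eta}^{\NS,\ca}_{\opt,M,n}(\mN^{\CSIR},\mP_{\!S})$ for every $M,n$. This is immediate because $\mathcal{Z}^{\NS,\ca}(M,n)\subseteq \mathcal{Z}^{\NS,\nc}(M,n)$: the non-causal class is obtained by dropping condition \textit{C3}, as noted in the Remark on non-causal CSIT. The success probability $\eta(\cdot)$ is given by the same expression \eqref{eq:joint_prob_causal}--\eqref{eq:prob_succ_NS} in both classes, so taking the supremum over a larger set can only increase it. Applying this to the channel $\mN^{\CSIR}$, with output alphabet $\mathcal{Y}\times\mathcal{S}$, is legitimate because $(\mN^{\CSIR},\mP_{\!S})$ is itself a channel with state.

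Next, I will chain the inequalities for $M=n=2$ and the $Z_0/Z_1$ channel:
\begin{align*}
{\eta}^{\NS,\nc}_{\opt,2,2}(\mN,\mP_{\!S}) &= {\eta}^{\NS,\nc}_{\opt,2,2}(\mN^{\CSIR},\mP_{\!S}) \geq {\eta}^{\NS,\ca}_{\opt,2,2}(\mN^{\CSIR},\mP_{\!S}) \geq 7/8 \\
&> 13/16 \geq {\eta}^{\NS,\ca}_{\opt,2,2}(\mN,\mP_{\!S}).
\end{align*}
The equality is \eqref{eq:VSCSIT}, i.e., \cite[Thm. 1]{Yao_Jafar_CSITTP}. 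The first inequality is the class-inclusion step above. The bounds $7/8$ and $13/16$ come from \eqref{eq:Prob_1} and \eqref{eq:Prob_2} of Theorem \ref{thm:Prob}. The strict step holds because $7/8=14/16>13/16$.

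There is essentially no obstacle beyond citing these facts. The only point needing care is the class-inclusion step: one must confirm that the non-causal scheme class used in \eqref{eq:VSCSIT} is exactly $\mathcal{Z}^{\NS,\nc}$ as defined here, namely \textit{C1} and \textit{C2} without \textit{C3}, with the same success functional. The Remark on non-causal CSIT asserts precisely this.
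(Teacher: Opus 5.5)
Your proposal is correct and is exactly the paper's proof: the same chain through \eqref{eq:VSCSIT}, the inclusion $\mathcal{Z}^{\NS,\ca}\subseteq\mathcal{Z}^{\NS,\nc}$ applied to $\mN^{\CSIR}$, and the bounds $7/8$ and $13/16$ from Theorem~\ref{thm:Prob}. The only differences are cosmetic. You write the first step as an equality where the paper uses only $\geq$, which is fine since \eqref{eq:VSCSIT} is an equality. You also spell out the class-inclusion step, which the paper leaves implicit.
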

\begin{proof}
	${\eta}^{\NS,\nc}_{\opt,M=2,n=2}(\mN, \mP_{\!S}) \stackrel{\eqref{eq:VSCSIT}}{\geq} {\eta}^{\NS,\nc}_{\opt,M=2,n=2}(\mN^{\CSIR}, \mP_{\!S}) \geq {\eta}^{\NS,\ca}_{\opt,M=2,n=2}(\mN^{\CSIR}, \mP_{\!S}) \stackrel{\eqref{eq:Prob_1}}{\geq} 7/8 > 13/16 \stackrel{\eqref{eq:Prob_2}}{\geq} {\eta}^{\NS,\ca}_{\opt,M=2,n=2}(\mN, \mP_{\!S})$.
\end{proof}

\section{Extension: Strictly Causal and Non-Causal CSIT}
Having established in Theorem \ref{thm:NS_capacity} that under NS assistance, having causal CSIT suffices to achieve the same capacity as having non-causal CSIT, in this section we study the NS-assisted capacity in settings where the CSIT can be strictly causal, non-causal or a mix of both.

\subsection{Coding schemes} \label{sec:NSschemes_mixed}
To have a model that encapsulates all results we have in this section, let us consider a scenario where two kinds of CSITs, \emph{strictly causal} CSIT and \emph{non-causal} CSIT, are present. We model the channel with state by having two independent states, namely $(S,T)$, where $S$ is known non-causally to the transmitter and $T$ is known strictly causally to the transmitter. Specifically, let $\mP_{\!S}\in \mathcal{P}(\mathcal{S})$ and $\mP_{\!T}\in \mathcal{P}(\mathcal{T})$ denote the distributions for $S$ and $T$, respectively. Let $\mN\in \mathcal{P}(\mathcal{Y}\mid \mathcal{X}\times \mathcal{S} \times \mathcal{T})$. A message $W$ originates at the transmitter, and is encoded into $X_1,X_2,\dots, X_n$ that are input into the channel $\mN$ over $n$ channel uses. For the $i^{th}$ uses of the channel, the codeword symbol $X_i$ can depend on $(W, S^n, T^{i-1})$.  The encoder and decoder are allowed to share in advance any  non-signaling resource. Fig. \ref{fig:scheme_mixed} illustrates this scenario. 

\begin{figure}[htbp]
\centering
\begin{tikzpicture}
\def\w{0.9}
\def\h{0.3}
\def\hh{0.9}
\fill[gray!20] (0,0) -- (1*\w,0) -- (1*\w,-\h) -- (2*\w,-\h) -- (2*\w,-2*\h) -- (2*\w+0.25,-2*\h)
decorate[decoration={zigzag, amplitude=2pt, segment length=6pt}] { -- (2*\w+0.75,-2*\h) }
-- (3*\w,-2*\h) -- (3*\w,-3*\h) -- (4*\w,-3*\h) -- (4*\w, 0) -- (6.5*\w, 0)
-- (6.5*\w, -3*\h-\hh) -- (2*\w+0.75, -3*\h-\hh)
decorate[decoration={zigzag, amplitude=2pt, segment length=6pt}] { -- (2*\w+0.25, -3*\h-\hh) }
-- (0*\w, -3*\h-\hh) -- cycle;
\draw[thick] (0,0) -- (1*\w,0) -- (1*\w,-\h) -- (2*\w,-\h) -- (2*\w,-2*\h) -- (2*\w+0.25,-2*\h)
decorate[decoration={zigzag, amplitude=2pt, segment length=6pt}] { -- (2*\w+0.75,-2*\h) }
-- (3*\w,-2*\h) -- (3*\w,-3*\h) -- (4*\w,-3*\h) -- (4*\w, 0) -- (6.5*\w, 0)
-- (6.5*\w, -3*\h-\hh) -- (2*\w+0.75, -3*\h-\hh)
decorate[decoration={zigzag, amplitude=2pt, segment length=6pt}] { -- (2*\w+0.25, -3*\h-\hh) }
-- (0*\w, -3*\h-\hh) -- cycle;

\draw[thick,gray] (1*\w,-\h) -- (1*\w,-3*\h-\hh);
\draw[thick,gray] (2*\w,-2*\h) -- (2*\w,-3*\h-\hh);
\draw[thick,gray] (3*\w,-3*\h) -- (3*\w,-3*\h-\hh);
\draw[thick,gray] (4*\w,-3*\h) -- (4*\w,-3*\h-\hh);

\node[] (W) at (-0.75,-0.75) {\small $W$};
\draw[-{latex}, thick] (W) -- ($(W.east)+(0.45,0)$);

\node[] at (-0.75,-2.5) (PS1) {\small $(S_1,T_1)$};
\node[] at (-0.75,-3.25) (PS2) {\small $(S_2,T_2)$};
\node at (-0.75,-3.75) {\small $\vdots$};
\node[] at (-0.75,-4.5) (PS3) {\small $(S_n,T_n)$};

\node[rectangle, draw,  thick, minimum width = 0.75cm, minimum height=0.6cm, fill=black!10] at (3.8,-2.4) (N1) {$\mN$};
\node[rectangle, draw, thick, minimum width = 0.75cm, minimum height=0.6cm, fill=black!10] at (3.8,-3.15) (N2) {$\mN$};
\node at (3.8,-3.68) {\small $\vdots$};
\node[rectangle,  draw, thick, minimum width = 0.75cm, minimum height=0.6cm, fill=black!10] at (3.8,-4.4) (N3) {$\mN$};

\draw [thick, -{latex}] (PS1.east) -- ($(PS1.east)+(3.45,0)$);
\draw [thick, -{latex}] (PS2.east) -- ($(PS2.east)+(3.45,0)$);
\draw [thick, -{latex}] (PS3.east) -- ($(PS3.east)+(3.45,0)$);

\draw [thick, -{latex}] (0.5*\w, -3*\h-\hh) -- (0.5*\w, -3*\h-\hh-0.5) -- (0.5*\w+2.98, -3*\h-\hh-0.5); 

\draw [thick,  -] (1.5*\w, -3*\h-\hh) -- (1.5*\w, -3*\h-\hh-0.4);
\draw[thick] (1.5*\w, -3*\h-\hh-0.4)
    arc[start angle=135, end angle=225, radius=0.3];
\draw[thick, -{latex}] (1.5*\w, -3*\h-\hh-0.82) -- (1.5*\w, -3*\h-\hh-1.25) -- (1.5*\w+2.07, -3*\h-\hh-1.25);

\draw [thick] (3.5*\w, -3*\h-\hh) -- (3.5*\w, -3*\h-\hh-0.4);
\draw[thick] (3.5*\w, -3*\h-\hh-0.4)
    arc[start angle=150, end angle=210, radius=1.15];
\draw[thick, -{latex}] (3.5*\w, -3*\h-\hh-1.55) -- (3.5*\w, -3*\h-\hh-2.5) -- (3.5*\w+0.28, -3*\h-\hh-2.5);

\draw[thick, -{latex}] (N1.east) -- ($(N1.east)+(0.25,0)$) -- ($(N1.east)+(0.25,0.6)$);

\draw[thick, -{latex}] (N2.east) -- ($(N2.east)+(0.65,0)$) -- ($(N2.east)+(0.65,1.34)$);

\draw[thick, -{latex}] (N3.east) -- ($(N3.east)+(1.35,0)$) -- ($(N3.east)+(1.35,2.59)$);

\node (Wh) at (4.75,0.75) {\small $\widehat{W}$};
\draw[thick, -{latex}] ($(Wh.south)+(0,-0.44)$)--(Wh.south);

\node at (0.23,-2.1) {\small $X_1$};
\node at (1.1,-2.8) {\small $X_2$};
\node at (2.2,-2.1) {\small $\cdots$};
\node at (2.9,-3.75) {\small $X_n$};

\node (Sn) at (0.5,0.6) {\small $S^n$};
\draw[-{latex}, thick] (Sn) -- ($(Sn.south)+(0,-0.35)$);
\node (T1) at (1.37,0.6) {\small $T_1$};
\draw[-{latex}, thick] (T1) -- ($(T1.south)+(0,-0.65)$);
\node at (2.25,0.6) {\small $\cdots$};
\node (T2) at (3.15,0.6) {\small $T_{n-1}$};
\draw[-{latex}, thick] (T2) -- ($(T2.south)+(0,-1.2)$);

\node at (4.4,-2.6) {\scriptsize $Y_1$};
\node at (5.05,-2.6) {\scriptsize $Y_2$};
\node at (5.2,-2.2) {\scriptsize $\cdots$};
\node at (5.75,-2.6) {\scriptsize $Y_n$};

\node [fill=black!15, minimum width=5.2cm] at (3,-1.35) {\small $\mZ\big(x^n,\widehat{w} \;\big|\; w,s^n,t^{n-1},y^n \big)$};

\end{tikzpicture}
\caption{NS-assisted coding scheme for a channel with two states, $(S,T)$, where $S$ is known non-causally, and $T$ is known strictly causally to the transmitter.}
\label{fig:scheme_mixed}
\end{figure}

Such a NS-assisted coding scheme is specified by $(M,n)$ and a sequential non-signaling correlation 
\begin{align*}
	\mZ \in \mathcal{P}(\mathcal{X}^n \times [M] \mid [M] \times \mathcal{S}^{n} \times \mathcal{T}^{n-1}  \times \mathcal{Y}^n)
\end{align*}
with $\mZ (x^n,\widehat{w} \mid w, s^{n}, t^{n-1},  y^n)$ specifying the probability of $X^n=x^n, \widehat{W} = \widehat{w}$ given $W=w, S^{n}=s^{n}, T^{n-1} = t^{n-1}, Y^n = y^n$. The additional conditions for $\mZ$ are captured as follows.
\begin{enumerate}[label=\textit{C\arabic*}:, start=4]
	\item The marginal probability $\mZ(x^n \mid w, s^{n}, t^{n-1}, y^n)$ is invariant under changes of $y^n$. We write
	\begin{equation} \label{eq:cond_strc_1}
	\begin{aligned}
	&\mZ(x^n\mid w,s^n,t^{n-1},y^n) \triangleq \mZ(x^n\mid w,s^n,t^{n-1}),\\
	&\forall (x^n,w,s^n,t^{n-1},y^n) \in \mathcal{X}^n \times [M] \times \mathcal{S}^n \times \mathcal{T}^{n-1} \times \mathcal{Y}^n
	\end{aligned}
	\end{equation}
	\item The marginal probability $\mZ(\widehat{w} \mid w, s^{n}, t^{n-1}, y^n)$ is invariant under changes of $(w, s^{n}, t^{n-1})$. We write
	\begin{equation} \label{eq:cond_strc_2}
	\begin{aligned}
		&\mZ(\widehat{w} \mid w,s^n,t^{n-1},y^n) \triangleq \mZ(\widehat{w} \mid y^n), \\
		&\forall (\widehat{w},w,s^n,t^{n-1},y^n) \in  [M] \times [M] \times \mathcal{S}^n \times \mathcal{T}^{n-1} \times \mathcal{Y}^n
	\end{aligned}
	\end{equation}
	\item For each $i\in [n-1]$, the marginal probability $\mZ(x^i,\widehat{w} \mid w, s^{n}, t^{n-1}, y^n)$ is invariant under changes of $t_{i}^{n-1}$. We write
	\begin{equation}\label{eq:cond_strc_3}
	\begin{aligned} 
		&\mZ(x^i,\widehat{w} \mid w, s^{n}, t^{n-1}, y^n) \triangleq \mZ(x^i,\widehat{w} \mid w, s^{n}, t^{i-1}, y^n) \\
		&\forall (i,x^i,\widehat{w},w,s^n,t^{n-1},y^n) \in [n-1]\times \mathcal{X}^{i} \times [M]\times [M] \times \mathcal{S}^n \times \mathcal{T}^{i-1} \times \mathcal{Y}^n
	\end{aligned}
	\end{equation}
\end{enumerate}

Let us also write $\mZ(x^i\mid w,s^n, t^{n-1},y^n) \triangleq \mZ(x^i\mid w,s^n, t^{i-1})$ for the conditional probability of the scheme producing $X^i=x^i$ given $W=w, S^n=s^n$ and $T^{i-1}=t^{i-1}$, since this does not depend on $(t_{i}^{n-1},y^n)$. It then follows that $\mZ$ admits the factorization as
\begin{align}
	&\mZ(x^n,\widehat{w}\mid w,s^n,t^{n-1},y^n) \notag \\
	&= \Bigg( \prod_{i=1}^n \mZ(x_i\mid x^{i-1}, w,s^n, t^{i-1}) \Bigg) \mZ(\widehat{w}\mid x^n, w,s^n, t^{n-1}, y^n)
\end{align}

The coding scheme operates in the following way. $(W,S^n,T^n)$ are generated first. At the first channel use, the transmitter provides $(W,S^n)$ to the scheme, which produces $X_1$ to be sent through the channel. At the $i=2,\ldots, n$ channel use, the transmitter provides $T_{i-1}$ to the scheme, which produces $X_i$ to be sent through the channel. After $n$ uses of the channel, the receiver provides $Y^n$ to the scheme, which produces the decoded message $\widehat{W}$. 

Let $\sfp(w,x^n,y^n,s^n,t^n,\widehat{w}) = \Pr(W=w,S^n=s^n,T^n=t^n,X^n=x^n,Y^n=y^n,\widehat{W}=\widehat{w})$ denote the joint distribution of $(W,S^n,T^n,X^n,Y^n, \widehat{W})$ when the channel is present. Similar to \eqref{eq:joint_prob_causal}, this is defined as,
\begin{align}
	&\sfp_{W S^n T^n X^n Y^n \widehat{W}}(w, s^n, t^n, x^n, y^n,  \widehat{w}) \notag \\
	&= \sfp(w,s^n,t^n) \Bigg(\prod_{i=1}^n  \sfp(x_i\mid y^{i-1}, x^{i-1}, t^{n},s^n,w) \sfp(y_i\mid y^{i-1},x^i,t^n,s^n,w) \Bigg) \sfp(\widehat{w} \mid y^n,x^n,t^n,s^n,w)\\
	&= \frac{1}{M}\mP_{\!S}^{\otimes n}(s^n) \mP_{\!T}^{\otimes n}(t^n) \Bigg(\prod_{i=1}^n  \mZ(x_i \mid x^{i-1}, w,s^n, t^{i-1}) \mN(y_i\mid x_i, s_i,t_i) \Bigg) \mZ(\widehat{w} \mid x^n,w,s^n,t^{n-1},y^n) \\
	&= \frac{1}{M} \Bigg(\prod_{i=1}^n \mP_{\!S}(s_i)\mP_{\!T}(t_i) \mN(y_i\mid x_i, s_i) \Bigg) \mZ^{\NS, \ca} (x^n,\widehat{w} \mid w,s^n,t^{n-1},y^n) \label{eq:joint_prob_strc}
\end{align}
Let $\mathcal{Z}^{\NS,\mixed}(M,n)$ denote this class of NS-assisted coding schemes with message size $M$ and blocklength $n$. Achievable rates and the NS-assisted capacity for this setting are defined accordingly with respect to $\mathcal{Z}^{\NS,\mixed}(M,n)$, similarly to those in Section \ref{sec:def_rate_capacity}. 

\subsection{Result}
The main result in this section is formalized in Theorem \ref{thm:strictly_causal}.
\begin{theorem} \label{thm:strictly_causal}
	Let $C^{\NS,\mixed}(\mN, \mP_{\!S}, \mP_{\!T})$ denote the NS-assisted capacity for the channel with two states $(\mN, \mP_{\!S}, \mP_{\!T})$, when $S$ is known non-causally, and $T$ is known strictly causally to the transmitter. We have,
	\begin{align}
		C^{\NS,\mixed}(\mN, \mP_{\!S}, \mP_{\!T}) = \max_{\mP_{\! X\mid S} \in \mathcal{P}(\mathcal{X}\mid \mathcal{S})} I(X;Y\mid S)
	\end{align}
	where the maximization is over all $\mP_{\!X\mid S} \in \mathcal{P}(\mathcal{X}\mid \mathcal{S})$  such that $(S,T,X,Y) \sim \mP_{\!S}(s) \mP_{\!T}(t) \mP_{\!X\mid S}(x\mid s) \mN(y\mid x,s,t)$.
\end{theorem}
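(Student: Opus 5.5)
Achievability and converse are handled separately; achievability is immediate from the non-causal result, and the substance is the converse.

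\emph{Achievability.} Take any $\mP_{\!X\mid S}$. Use an NS-assisted coding scheme with non-causal CSIT for the channel with state $S$ alone, built as in \cite{Yao_Jafar_CSITTP}. The transmitter simply ignores $T^{n-1}$. The scheme then belongs to $\mathcal{Z}^{\NS,\mixed}(M,n)$, since condition \textit{C6} holds trivially. Treat $T$ as part of the channel noise: define the effective channel
\[
\bar{\mN}(y\mid x,s)=\sum_t \mP_{\!T}(t)\mN(y\mid x,s,t).
\]
For this effective channel, \cite[Thm. 2]{Yao_Jafar_CSITTP} gives the rate $\max_{\mP_{\!X\mid S}}I(X;Y\mid S)$, with $(S,X,Y)$ distributed exactly as in the statement once $T$ is marginalized out.

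\emph{Converse.} I would follow the meta-converse / hypothesis-testing approach that \cite{matthews2012linear} and \cite{Yao_Jafar_CSITTP} use for NS-assisted codes. The plan is to bound the success probability $\eta$ of any $\mZ\in\mathcal{Z}^{\NS,\mixed}(M,n)$ through a reference distribution in which $Y^n$ is independent of $W$ given $S^n$.

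Define the auxiliary distribution $\sfq$ by replacing each channel factor $\mN(y_i\mid x_i,s_i,t_i)$ with a fixed $\mathsf{Q}(y_i\mid s_i)$. Under $\sfq$, the pair $(W,X^n,T^n)$ is independent of $Y^n$ given $S^n$. Condition \textit{C5} then says the decoder's marginal depends only on $y^n$, so
\[
\sfq(\widehat{W}=W)=\frac1M .
\]
The data-processing inequality for the hypothesis-testing divergence gives
\[
\log M\le D_H^{\epsilon}\bigl(\sfp_{S^nX^nT^{n-1}Y^n}\,\Vert\,\sfq\bigr),
\]
up to the usual terms.

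The crux is to single-letterize, and this is where strict causality of $T$ enters. Because $T_i$ is unknown to the transmitter when $X_i$ is chosen (conditions \textit{C4} and \textit{C6}), $T_i$ is independent of $(X_i,S^n,T^{i-1})$ under $\sfp$. Moreover, $Y_i$ depends on $T_i$ only through the channel. Fix $(x^{i-1},s^n,t^{i-1})$. Conditioned on it, the $i$-th factor of the likelihood ratio is
\[
\frac{\bar{\mN}(y_i\mid x_i,s_i)}{\mathsf{Q}(y_i\mid s_i)}
\]
after averaging over $T_i$. Here I would instead take $\sfq$ to keep the $T^n$ marginal and compare on the variables $(S^n,X^n,Y^n)$ only, using the data-processing inequality to discard $T^{n-1}$.

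Next I would choose $\mathsf{Q}(y\mid s)$ as the output distribution induced by the optimal $\mP_{\!X\mid S}$, which is a saddle point of $I(X;Y\mid S)$ for the effective channel $\bar{\mN}$. With that choice, the sum of the conditional means of the log-likelihood ratios is at most $n\max I(X;Y\mid S)$. Chebyshev's inequality applied to the martingale-difference sum, with bounded variance since the alphabets are finite, then gives the weak converse: $\eta\to1$ forces $R\le\max_{\mP_{\!X\mid S}}I(X;Y\mid S)$. A simpler alternative for the weak converse would be Fano-like: show that $\eta$ is at most the success probability of a non-causal scheme on $\bar{\mN}$ with feedback-free marginalization, and invoke the known non-causal converse.

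\emph{Main obstacle.} The hard step is rigorously justifying that $T^{n-1}$ can be averaged out, i.e., that
\[
\sfp(y^n\mid x^n,s^n)=\prod_i\bar{\mN}(y_i\mid x_i,s_i)
\]
fails in general (because $X_i$ depends on $T^{i-1}$, which correlates with $Y^{i-1}$), so the converse must keep the sequential structure. I would handle this with the martingale argument above, conditioning on $(S^n,X^{i-1},T^{i-1},Y^{i-1})$ at step $i$. The key fact at each step is that $Y_i$ is conditionally independent of $(T^{i-1},Y^{i-1})$ given $(X_i,S_i)$, with law $\bar{\mN}$, and this follows from \eqref{eq:joint_prob_strc}. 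Each increment's conditional expected log-ratio is then a conditional divergence bounded by the capacity.
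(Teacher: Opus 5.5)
Your achievability argument is fine and is essentially the paper's: ignore $T$ and code for the averaged channel $\bar{\mN}$. The converse, however, has a genuine gap in the choice of reference measure.

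With your $\sfq$ (keep $\mP_{\!T}(t_i)$, replace $\mN(y_i\mid x_i,s_i,t_i)$ by $\mathsf{Q}(y_i\mid s_i)$), the likelihood ratio is $\prod_i \mN(Y_i\mid X_i,S_i,T_i)/\mathsf{Q}(Y_i\mid S_i)$. Only the $T_n$ factor can be averaged out, because $\widehat{W}$ is produced by the box from $T^{n-1}$. The conditional mean of each increment is $\sum_t\mP_{\!T}(t)\,D\big(\mN(\cdot\mid X_i,S_i,t)\,\Vert\,\mathsf{Q}(\cdot\mid S_i)\big)\geq D\big(\bar{\mN}(\cdot\mid X_i,S_i)\,\Vert\,\mathsf{Q}(\cdot\mid S_i)\big)$. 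So this $\sfq$ only gives a bound of the form $\max I(X,T;Y\mid S)$. For example, take $\mN(y\mid x,s,t)=\mathbb{I}[y=t]$ with $T$ uniform binary: the log-ratio is exactly $n$ bits under $\sfp$, while the capacity is $0$.

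Your repair, discarding $T^{n-1}$ by data processing, does not work. The conditional law of $T^{n-1}$ given $(W,S^n,X^n,Y^n)$ differs under $\sfp$ and $\sfq$, so $\mathbb{I}[\widehat{W}=W]$ is not one common kernel applied to both marginals. Moreover, data processing only says the marginal divergence is \emph{smaller} than the joint one, which is the wrong direction for you. Your martingale bound on $\sum_i\log\big(\bar{\mN}(Y_i\mid X_i,S_i)/\mathsf{Q}(Y_i\mid S_i)\big)$ is correct as a statement about $\sfp$, and your ``key fact'' does hold. But that sum is not $\log(\sfp/\sfq)$ for any $\sfq$ you have shown to satisfy $\sfq(\widehat{W}=W)\le 1/M$.

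The missing idea, which is the paper's, is to let the reference measure reproduce the backward kernel of $T_i$. Since $T_i$ is independent of $(X_i,S_i)$ under $\sfp$, write $\mP_{\!T}(t_i)\mN(y_i\mid x_i,s_i,t_i)=\bar{\mN}(y_i\mid x_i,s_i)\,\sfp(t_i\mid x_i,y_i,s_i)$, where the second factor is a fixed function. Then define $\sfq$ by replacing only $\bar{\mN}(y_i\mid x_i,s_i)$ with $\sfp(y_i\mid s_i)$. Now $\sfp/\sfq=\prod_i\bar{\mN}(Y_i\mid X_i,S_i)/\sfp(Y_i\mid S_i)$ exactly, so $D(\sfp\Vert\sfq)=\sum_i I(X_i;Y_i\mid S_i)$. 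Data processing onto $(W,\widehat{W})$ plus Fano then finishes, with no martingale or saddle-point step.

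The price is that under this $\sfq$ the variable $T_i$ depends on $(X_i,Y_i)$. Your one-line derivation of $\sfq(\widehat{W}=W)=1/M$ from \textit{C5} alone therefore no longer applies. Instead one sums out $t_n$, then $x_n$ using \textit{C6} (so that what remains no longer depends on $t_{n-1}$), then $t_{n-1}$, and so on, ending with \textit{C5}. This is where strict causality is consumed on the reference side.

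Your ``simpler alternative'' is also not available as stated. The natural way to simulate $T_i$ consistently with $Y_i$ is to compose $\mZ$ with $\sfp(t_i\mid x_i,y_i,s_i)$. That makes $X_{i+1}$ depend on $y^i$ and violates \textit{C1}. The result is a scheme with feedback, and its converse is the very statement at issue.
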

\noindent We present the proof in Appendix \ref{proof:strictly_causal}. Let us provide the following observations.
\begin{enumerate}[label=\textit{O\arabic*:}, start=3]
	\item Note that  $(S,X,Y) \sim \mP_{\!S}(s) \mP_{\!X\mid S}(x\mid s) \Big( \sum_{t}\mP_{\!T}(t) \mN(y\mid x,s,t) \Big) \triangleq \mP_{\!S}(s) \mP_{\!X\mid S}(x\mid s) \mN'(y\mid x,s)$, and that $I(X;Y\mid S)$  only depends on the marginal distribution of $(S,X,Y)$. Therefore, the capacity coincides with the NS-assisted capacity when $S$ is known either causally or non-causally to the transmitter, whereas $T$ is not known to the transmitter. It follows that having strictly causal CSIT cannot improve the NS-assisted capacity in any case.
	\item It follows from Theorem \ref{thm:NS_capacity}, that in order to achieve this capacity  it suffices to have only causal knowledge of $S$ at the transmitter, i.e., $S_i$ revealed to the transmitter at the $i^{th}$ channel use for $i=1,2,\ldots, n$.
	\item Since strictly causal CSIT cannot improve the capacity, it follows that feedback cannot improve the capacity either. Specifically, suppose the noise in the channel is explicitly modeled as $Z$ such that $Y = f(X,S,T,Z)$ via some function $f$. Then $Z$ is independent of $(X,S,T)$ and can be thought of as an additional channel state. For the $i^{th}$ channel use, suppose a genie provides the transmitter with $Z_{i-1}$. Then it cannot be worse than the transmitter having the feedback $Y_{i-1}$, since the transmitter can compute $Y_{i-1}=f(X_{i-1}, S_{i-1}, T_{i-1}, Z_{i-1})$. However, $(Z_{i-1}, T_{i-1})$ together can be modeled as strictly causal CSIT, which does not improve the NS-assisted capacity.
	\item Recall from prior results that non-signaling assistance, strictly causal CSIT, and feedback, when available individually, cannot increase the channel capacity. Thus, Theorem \ref{thm:strictly_causal} strengthens these results, showing that even in the presence of all three, the channel capacity cannot increase.
\end{enumerate}

\section{Proof of Theorem \ref{thm:NS_capacity}} \label{proof:NS_capacity}
In this section, we write $g(n)=o(f(n))$ if $\lim_{n\to \infty}\tfrac{g(n)}{f(n)} = 0$.
The direction $C^{\NS,\ca} \leq \max_{\mP_{\!X\mid S}} I(X;Y\mid S)$ follows from the converse for $C^{\NS,\nc}$ established in \cite{Yao_Jafar_CSITTP} because $C^{\NS,\ca} \leq C^{\NS,\nc}$. It remains to show the achievability, i.e., $C^{\NS,\ca} \geq \max_{\mP_{\!X\mid S}} I(X;Y\mid S)$. 

Fix any $\mP_{\!X\mid S}$ (which defines $\mP_{\!XYS} = \mP_{\!S}\mP_{\!X\mid S}\mN $). We will show that any rate $R<I(X;Y\mid S)$ is achievable by NS assisted coding schemes with causal CSIT (satisfying \textit{C1}--\textit{C3} in Section \ref{sec:NSschemes}). To avoid degenerate cases, let us assume $\mP_{\!S}(\sigma) >0,~ \forall \sigma \in \mathcal{S}$.

\subsection{Preliminary steps}
\noindent {\bf [Types and typicality]}
To facilitate the proof, let us first invoke the following definition of type. Let $\mathcal{A}$ be a finite alphabet (set), and let $a^n \in \mathcal{A}^n$ be an $n$-length sequence with symbols taking values in $\mathcal{A}$. Define the  (unnormalized)  type of $a^n$ as
\begin{align}
	{\bf N}_{a^n}(\alpha) \triangleq \sum_{i=1}^n \mathbb{I}[a_i=\alpha], ~~ \forall \alpha \in \mathcal{A}.
\end{align}
Note that normalizing ${\bf N}_{a^n}$ by the length $n$ gives the empirical distribution of the sequence $a^n$.

We denote by $\mathcal{T}_{\epsilon}^{(n)}(\mP_{\!A})$ (or simply $\mathcal{T}_{\epsilon}^{(n)}$ when the distribution is clear from the context) the strong typical set of length $n$ sequences \cite{NIT}, with tolerance $\epsilon$, generated according to the distribution $\mP_{\!A}$ for a random variable $A$. Specifically,
\begin{align} \label{eq:def_strong_typical_set}
	&\mathcal{T}_{\epsilon}^{(n)}(\mP_{\!A}) \triangleq\big\{a^n\in \mathcal{A}^n\colon |\tfrac{1}{n}{\bf N}_{a^n}(\alpha) - \mP_{\!A}(\alpha)| \leq \epsilon \mP_{\!A}(\alpha), \forall \alpha \in \mathcal{A} \big\}.
\end{align}
It is well known that if $A_1,A_2,\ldots, A_n$ are drawn i.i.d. according to $\mP_{\!A}$, then $\Pr\big(A^n \in \mathcal{T}_{\epsilon}^{(n)}(\mP_{\!A}) \big) \to 1$ as $n\to \infty$ for any $\epsilon>0$.

\noindent {\bf [A useful algorithm]}
Next, let us define a useful algorithm, which aims to map a sequence $a^n$ to another sequence $\tilde{a}^n$, such that type of the output ${\bf N}_{\tilde{a}^n}$ does not vary for different input $a^n$. Specifically, the inputs to the algorithm are $(n, \mathcal{A}, \mP_{\!A}, a^n, \epsilon)$, where $n\in \mathbb{N}$, $\mathcal{A}$ is a finite alphabet that does not include a special symbol $\phi$, $\mP_{\!A}$ is a distribution on $\mathcal{A}$, $a^n\in \mathcal{A}^n$, and $\epsilon\in (0,1)$. The algorithm is presented as Algorithm \ref{alg:type_mapping}. 

\begin{algorithm}[h]
\caption{Mapping $a^n\in \mathcal{A}^n$ to $\tilde{a}^n \in (\mathcal{A}\cup \{\phi\})^n$}\label{alg:type_mapping}
\KwIn{$n, \mathcal{A}, \mP_{\!A}, a^n, \epsilon$}
\KwOut{$\tilde{a}^n$, ${\sf flag}$}
\For{$\alpha \in \mathcal{A}$}
	{$c_{\alpha} \gets 0$ \Comment{Initialize a counter for each $\alpha \in \mathcal{A}$}\\
	$t_{\alpha} \gets  \lfloor n(1-\epsilon)\mP_{\!A}(\alpha) \rfloor$\\ \Comment{Set a budget for each $\alpha \in \mathcal{A}$}}
$\bar{c} \gets 0$ \Comment{Initialize an extra counter} \\
$\bar{t} \gets n - \sum_{\alpha}t_{\alpha}$ 
\Comment{Note that $\sum_{\alpha}t_{\alpha} + \bar{t} = n$}\\
${\sf flag} \gets 1$\\
\For{$i \gets 1,2,\ldots, n$}{
	\If{${\sf flag} = 1$}{
		\If{$c_{a_i} < t_{a_i}$}{
			$\tilde{a}_i \gets a_i$ \Comment{Set $\tilde{a}_i$ to $a_i$} \\
			$c_{a_i} \gets c_{a_i} +1$}
		\ElseIf{$\bar{c} < \bar{t}$}{
				$\tilde{a}_i \gets \phi$ \Comment{Set $\tilde{a}_i$ to $\phi$}\\
				$\bar{c} \gets \bar{c} + 1$}
			\Else{
				${\sf flag} \gets 0$\\
				$\tilde{a}_i \gets$ first $\alpha$  for which $c_{\alpha} < t_{\alpha}$\\ \Comment{Set $\mathcal{A} \ni \tilde{a}_i \neq  a_i$}\\
				Increase that $c_{\alpha}$ by $1$}
		}	
	\Else{
	$\tilde{a}_i \gets$ first $\alpha$  for which $c_{\alpha} < t_{\alpha}$\\ \Comment{Set $\mathcal{A} \ni \tilde{a}_i \neq  a_i$}\\
				Increase that $c_{\alpha}$ by $1$
		}
}
\end{algorithm}

For fixed inputs $(n, \mathcal{A}, \mP_{\!A}, \epsilon)$, let us write the output of Algorithm \ref{alg:type_mapping} for an input $a^n$ simply as $\tilde{a}^n$. 
Then $(\tilde{a}_1,\dots,\tilde{a}_n)$ contains exactly $t_{\alpha}\triangleq \lfloor n(1-\epsilon)\mP_{\!A}(\alpha) \rfloor$ instances of $\alpha$ for each $\alpha \in \mathcal{A}$, and  exactly $\bar{t}\triangleq n - \sum_{\alpha\in\mathcal{A}}t_{\alpha}$ instances of $\phi$. Processing the time slots from $i=1$ to $i=n$ sequentially (causally), the algorithm looks at $a_i$, and sets $\tilde{a}_i = a_i$ if the budget $t_{a_i}$ for $a_i$ has not yet been exhausted. Otherwise, it assigns $\tilde{a}_i=\phi$, if the budget $\bar{t}$ for $\phi$ is not exhausted. If both budgets are exhausted, then it changes strategy (and marks ${\sf flag} = 0$ to indicate the change) to instead set $\tilde{a}_i=\alpha \neq a_i$ for the first $\alpha$ whose budget is not yet exhausted, in order to have exactly $t_{\alpha}$ elements of $\alpha$ for every $\alpha \in \mathcal{A}$ in $\tilde{a}^n$, and thus exactly $\bar{t}$ elements of $\phi$ in $\tilde{a}^n$.

We illustrate two cases of this algorithm in Fig. \ref{fig:mapping}, one with returned ${\sf flag} = 1$ and the other one with returned ${\sf flag} = 0$.

\begin{figure}
\center
\begin{tikzpicture}
\def\rectwidth{0.5}    
\def\rectheight{0.5}   
\def\xgap{0.05}         

\definecolor{color0}{RGB}{178,243,178}
\definecolor{color1}{RGB}{248,207,245}
\definecolor{color2}{RGB}{127,127,127}
\definecolor{myred}{RGB}{243,92,109}
\definecolor{myblue}{RGB}{46,89,170}
\definecolor{mygreen}{RGB}{59,138,42}
\definecolor{myyellow}{RGB}{242,145,57}
\definecolor{mypurple}{RGB}{127,110,180}

\def\dataA{
0/0,
1/1,
0/0,
0/0,
1/1,
1/1,
1/1,
0/0,
1/1,
0/0
}
\def\dataAp{
0/0,
1/1,
0/0,
0/0,
1/1,
1/1,
2/\phi,
0/0,
2/\phi,
0/0
}
\def\dataB{
0/0,
1/1,
0/0,
0/0,
1/1,
1/1,
1/1,
1/1,
1/1,
0/0
}
\def\dataBp{
0/0,
1/1,
0/0,
0/0,
1/1,
1/1,
2/\phi,
2/\phi,
0/0,
0/0
}

\node[ ] at (2.8,1.45) {
\scriptsize {\color{black} $n=10$}, \colorbox{color0!50}{$t_{0}=5$},~
\colorbox{color1!50}{$t_{1}=3$},~
\colorbox{color2!30}{$\bar{t}=2$}
};

\node at (2.8,0.8) []{\footnotesize Algorithm \ref{alg:type_mapping} with returned ${\sf flag}=1$};
\begin{scope}[shift = {(0,0)}]
\node at (-0.5,0.3) {\footnotesize $a^n$};
\foreach \d/\a [count=\i] in \dataA {
  \pgfmathsetmacro{\x}{(\i-1)*(\rectwidth+\xgap)}
  \filldraw[color=color\d,draw=black] (\x,0) rectangle ++(\rectwidth,\rectheight);
  \node at (\x+0.5*\rectwidth,0.5*\rectheight) {\footnotesize $\a$};
}
\node at (-0.5,-0.5) {\tiny ${\sf flag} \gets 1$};
\draw[-{latex}, thick] (3.55,-0.2)--(3.55,-0.8);
\draw[-{latex}, thick] (4.65,-0.2)--(4.65,-0.8);
\end{scope}

\begin{scope}[shift = {(0,-1.5)}]
\node at (-0.5,0.3) {\footnotesize $\tilde{a}^n$};
\foreach \d/\a [count=\i] in \dataAp {
  \pgfmathsetmacro{\x}{(\i-1)*(\rectwidth+\xgap)}
  \filldraw[color=color\d,draw=black] (\x,0) rectangle ++(\rectwidth,\rectheight);
  \node at (\x+0.5*\rectwidth,0.5*\rectheight) {\footnotesize $\a$};
  }
\end{scope}
\node at (2.8,-2.2) []{\footnotesize Algorithm \ref{alg:type_mapping} with returned ${\sf flag} = 0$ };
\begin{scope}[shift = {(0,-3)}]
\node at (-0.5,0.3) {\small $a^n$};
\foreach \d/\a [count=\i] in \dataB {
  \pgfmathsetmacro{\x}{(\i-1)*(\rectwidth+\xgap)}
  \filldraw[color=color\d,draw=black] (\x,0) rectangle ++(\rectwidth,\rectheight);
  \node at (\x+0.5*\rectwidth,0.5*\rectheight) {\footnotesize $\a$};
  }
  \node at (-0.5,-0.5) {\tiny ${\sf flag} \gets 1$};
  \draw[-{latex}, thick] (3.55,-0.2)--(3.55,-0.8);
  \draw[-{latex}, thick] (4.1,-0.2)--(4.1,-0.8);
  \draw[-{latex}, thick, color=myred] (4.7,-0.2)--(4.7,-0.8) node [above right=0.1cm and 0cm] {\tiny {\color{myred} ${\sf flag} \gets 0$}};
\end{scope}

\begin{scope}[shift = {(0,-4.5)}]
\node at (-0.5,0.3) {\small $\tilde{a}^n$};
\foreach \d/\a [count=\i] in \dataBp {
  \pgfmathsetmacro{\x}{(\i-1)*(\rectwidth+\xgap)}
  \filldraw[color=color\d,draw=black] (\x,0) rectangle ++(\rectwidth,\rectheight);
  \node at (\x+0.5*\rectwidth,0.5*\rectheight) {\footnotesize $\a$};
  }
\end{scope}

\end{tikzpicture}
\caption{Running instances of Algorithm \ref{alg:type_mapping}. $\mathcal{A}= \{0,1\}$.}
\label{fig:mapping}
\end{figure}

Let us summarize the key properties. 

\begin{enumerate}
\item[P1.] The output type ${\bf N}_{\tilde{a}^n}$ is the same for all $a^n \in\mathcal{A}^n$. 
\item[P2.] Fixing $a^i$ fixes $\tilde{a}^i$ and therefore also fixes ${\bf N}_{\tilde{a}_{i+1}^n}=  {\bf N}_{\tilde{a}^n} - {\bf N}_{\tilde{a}^i}$, i.e., the type of $\tilde{a}_{i+1}^n$.
\end{enumerate}

\begin{remark} \label{rem:flag}
	A sufficient and necessary condition for returning ${\sf flag}=1$ is that ${\bf N}_{a^n}(\alpha) \geq t_{\alpha}, \forall \alpha \in \mathcal{A}$. 
	For the sufficiency, if ${\bf N}_{a^n}(\alpha) \geq t_{\alpha}, \forall \alpha \in \mathcal{A}$, then the condition in Line 9 will be violated exactly $\big({\bf N}_{a^n}(\alpha) - t_{\alpha} \big)$ times for each $\alpha \in \mathcal{A}$. For each time Line 9 is violated, the condition in Line 12 must be satisfied as $\sum_{\alpha}\big({\bf N}_{a^n}(\alpha)-t_{\alpha} \big) =\bar{t}$. This means that Line 16-Line 18 will not be executed, so ${\sf flag}=1$ is returned. For the necessity, if ${\bf N}_{a^n}(\alpha^*) < t_{\alpha^*}$ for some $\alpha^* \in \mathcal{A} $, then there exists $i\in [n]$, for which $\tilde{a}_i  = \alpha^* \neq  a_i$, as in the end there must be exactly $t_{\alpha^*}$ elements of $\alpha^*$ in $\tilde{a}^n$. This means Line 17 and Line 18 must be executed, and thus ${\sf flag}=0$ is returned.
\end{remark}

\subsection{Coding scheme}
We are now ready to construct the coding scheme. 
First, according to the definition of conditional distribution, any scheme $\mZ(x^n, \widehat{w} \mid w, s^n, y^n)$ can be factorized as
\begin{align} 
	\mZ(x^n\mid w, s^n, y^n) \times \mZ(\widehat{w} \mid x^n, w, s^n, y^n).\label{eq:factorZ}  
\end{align}
The first factor $\mZ(x^n\mid w, s^n, y^n)$ describes the  conditional distribution of the scheme generating the outputs $X^n$ given $W,S^n, Y^n$. The second factor $\mZ(\widehat{w} \mid x^n, w, s^n, y^n)$ describes the conditional distribution of the scheme generating the output $\widehat{W}$ given $W,S^n, Y^n$ and $X^n$.

For our achievability proof, we need to first specify $\mZ(x^n\mid w, s^n, y^n)$ and $\mZ(\widehat{w} \mid x^n, w, s^n, y^n)$ so that the resulting $\mZ(x^n, \widehat{w} \mid w, s^n, y^n)$ satisfies conditions \textit{C1},\textit{C2},\textit{C3}.
Our choice of coding scheme, which we will refer to as the \emph{authentication solution}, has the first factor
\begin{align} \label{eq:NSscheme_1}
\hspace{-0.2cm}	\mZ(x^n~|~ w, s^n, y^n) =  \prod_{i=1}^n \zeta_i(x_i~|~  s^i),
\end{align}
where for $i\in [n]$, $\zeta_i(x_i~|~  s^i)$ specifies the probability of outputting $X_i=x_i$ given the channel states up to the first $i$ channel uses, i.e., $s^i$.  The product form implies that $(X_1, X_2,\ldots, X_n)$ are mutually independent given $s^n$.

The authentication solution has the second factor
\begin{equation} \label{eq:NSscheme_2}
\begin{aligned}
	&\mZ(\widehat{w} \mid x^n, w, s^n, y^n)  \\
	&=\begin{cases}
	 	T(x^n,y^n,s^n), &  \widehat{w}=w \\
	 	\tfrac{1}{M-1} \big(1-T(x^n,y^n,s^n)\big) , &   \widehat{w}\neq w
	 	\end{cases},
\end{aligned}
\end{equation}
where $T\colon \mathcal{X}^n \times \mathcal{Y}^n \times \mathcal{S}^n \to \{0,\lambda\}$ is a binary function (to be specified) that maps every $(x^n, y^n, s^n)$ to either $0$ or some constant $\lambda \in (0,1]$, regardless of $(w, \widehat{w})$. Note that \eqref{eq:NSscheme_2} specifies a valid distribution for any $\lambda \in (0,1]$.  One could for now just think of $\lambda = 1$; the need for a value smaller than $1$ is to solve some rounding issue that will become clear later.
The distinctive features of the authentication solution are reflected in \eqref{eq:NSscheme_1} and \eqref{eq:NSscheme_2}. Firstly, \eqref{eq:NSscheme_1} means that the distribution of $X^n$ does not depend on $y^n$ or $w$, which suffices to satisfy \textit{C1}.
Secondly, given $(x^n, y^n, s^n, w)$, the distribution of $\widehat{W}$ depends on $w$ only through $\mathbb{I}[\widehat{W}=w]$,  regardless of how far  $\widehat{W}$ is from $w$. We point out that the intuition for the authentication solution comes from the simplification steps based on the idea of twirling used in \cite{cubitt2011zero, matthews2012linear, fawzi2024MAC, fawzi2024broadcast}.  

Intuitively, the authentication solution works as follows. At time slot $i$, the scheme $\mZ$ causally draws a random $X_i$ according to a distribution that depends only on $S^i$.
After $n$ time slots, with the receiver's input $Y^n$, the scheme `authenticates' whether $(X^n,Y^n, S^n)$ satisfy a certain relation (indicated by $T$). If the authentication is successful, then the scheme flips an (unfair) coin and outputs the correct message $\widehat{W}=W$ for the receiver with probability $\lambda$, or uniformly outputs some incorrect message $\widehat{W}\neq W$ with probability $1-\lambda$. Otherwise, the authentication fails, and the scheme uniformly outputs some incorrect message $\widehat{W}\neq W$. 

Note that for $i\in [0:n]$, corresponding to $S^n=s^n, Y^n=y^n, W=w, \widehat{W}=\widehat{w}$, and $X^i=x^i$ (we assume $X^0$ and $x^0$ are empty), we have
\begin{align}
	&\mZ(x^i, \widehat{w}\mid w,s^n,y^n) = \sum_{x_{i+1}^n} \mZ(x^n, \widehat{w}\mid w,s^n,y^n) \label{eq:expandZ}\\
	&= \prod_{j=1}^i \zeta_j(x_j\mid s^j) \times \sum_{x_{i+1}^n} \prod_{j=i+1}^n \zeta_j (x_j\mid s^j) \mZ(\widehat{w} \mid x^n,w, s^n,y^n)\\ 
	&= \prod_{j=1}^i \zeta_j(x_j\mid s^j)\times \mathbb{E}\left[\mZ(\widehat{w} \mid x^i, X_{i+1}^n,w, s^n,y^n)\right]\\
	&= \prod_{j=1}^i \zeta_j(x_j\mid s^j) \times 
	\begin{cases}
		\mathbb{E}[T(x^i,X_{i+1}^n, y^n, s^n)], & \widehat{w}=w\\
		\tfrac{1}{M-1}(1-\mathbb{E}[T(x^i,X_{i+1}^n, y^n, s^n)]), & \widehat{w}\not= w
	\end{cases} \label{eq:expectation_form}
\end{align}
where the expectations are taken with respect to  $X_{i+1}^n \sim \prod_{j=i+1}^n \zeta_j(x_j\mid s^j)$.
The form \eqref{eq:expectation_form} will be useful in the following analysis.

In the following we specify the authentication solution and verify that it also satisfies \textit{C2} and \textit{C3}. 

For $\epsilon \in (0,1)$, and $s^n\in \mathcal{S}^n$, let us run Algorithm \ref{alg:type_mapping} with inputs $(n,\mathcal{S}, \mP_{\!S}, s^n, \epsilon)$. Denote the output sequence associated with this $s^n$ as $\tilde{s}^n$, and denote the flag returned by the algorithm as ${\sf flag}[s^n]$.
We then specify the first factor, i.e., \eqref{eq:NSscheme_1}, of the authentication solution, by setting $\forall i\in [n], x_i \in \mathcal{X}$,
\begin{align} \label{eq:NSscheme_1_final}
\zeta_i(x_i~|~  s^i) &= \left\{\begin{array}{ll}
{\mP}_{X \mid S} (x_i \mid \tilde{s}_i),&\tilde{s}_i\neq \phi,\\
|\mathcal{X}|^{-1},&\tilde{s}_i= \phi.
\end{array}
\right.\\
&\triangleq {\mP}_{X \mid \tilde{S}}(x_i \mid \tilde{s}_i) 
\end{align}
This is feasible because due to Property P2 of Algorithm \ref{alg:type_mapping}, $\tilde{s}_i$ is determined by $s^i$, i.e., the channel state realizations up to the $i^{th}$ use of the channel, regardless of $s_{i+1}^n$.

Next we specify $T$ to specify the second factor, i.e., \eqref{eq:NSscheme_2} of the authentication solution.  
Given $s^n$ and its associated $\tilde{s}^n$, for $\sigma \in \mathcal{S}\cup \{\phi\}$, let $I_{\sigma} \triangleq \{i\in [n]\colon \tilde{s}_i = \sigma\}$ be the subset of indices $\{i\}$ for which $\tilde{s}_i=\sigma$. Also, define $n_\s \triangleq |I_{\sigma}|$ for $\sigma \in \mathcal{S}\cup \{\phi\}$. In the following we use bold letters to denote the corresponding length-$n$ sequence, e.g., ${\bf y} = y^n$. We also write ${\bf y}_{I} = (y_i)_{i\in I}$ for a subset $I\subseteq [n]$.

For each $y^n\in \mathcal{Y}^n$, we first map it to another sequence $\tilde{y}^n \in (\mathcal{Y} \times \{\phi\})^n$ by using Algorithm \ref{alg:type_mapping} in total $|\mathcal{S}|$ times. Specifically, for each $\sigma \in \mathcal{S}$, we run Algorithm \ref{alg:type_mapping} with inputs $(n_\sigma, \mathcal{Y}, \mP_{\!Y\mid S=\sigma}, {\bf y}_{I_{\sigma}}, \epsilon)$ and the output sequence associated with $\sigma$ is stored in $\tilde{\bf y}_{I_{\sigma}}$, i.e., the sequence $\tilde{y}^n$ at positions identified by $I_{\sigma}$. Here, for $\sigma\in \mathcal{S}$, $\mP_{\!Y\mid S=\sigma}$ is defined as the marginal distribution of $\mP_{\!XY\mid S=\s} = \mP_{\!X\mid S=\sigma}\mN_{Y\mid X,S=\sigma}$ for $Y$. Denote the flag returned by the algorithm associated with $\sigma$ as ${\sf flag}[y^n \mid S=\sigma]$. This defines $\tilde{y}_i$ for $i \in \cup_{\sigma\in \mathcal{S}} I_{\sigma}$. We further let $\tilde{y}_i = \phi$ for all $i\in I_{\phi}$. This completes the definition of $\tilde{y}^n$. 

In order to identify the positions at which the elements of $\tilde{\bf y}$ are not $\phi$, we further define, for $\s \in \mathcal{S}$,
\begin{align}
	\tilde{I}_{\sigma} \triangleq \{i\in [n]\colon \tilde{s}_i =\sigma, \tilde{y}_i \not= \phi\},
\end{align}
and define $\tilde{n}_{\s} \triangleq |\tilde{I}_{\s}|$.

\begin{remark} \label{rem:same_type}
According to Property P1 of Algorithm \ref{alg:type_mapping}, for each $\s\in \mathcal{S}\cup \{\phi\}$, $n_{\s}$ is invariant under changes of $s^n \in \mathcal{S}^n$, since the type of $\tilde{s}^n$ is invariant. Similarly, for each $\s \in \mathcal{S}$, $\tilde{n}_{\s}$ is invariant under changes of $y^n$. In addition, $\tilde{y}_i =\phi$ for all $i\in I_{\phi}$. It follows that the type of $\tilde{y}^n$ is invariant under changes of $(s^n,y^n) \in \mathcal{S}^n\times \mathcal{Y}^n$.
\end{remark}

Then we let, 
\begin{equation} \label{eq:def_T} 
\begin{aligned} 
	&T(x^n, y^n, s^n) \\
	&= \begin{cases}
			\lambda, & \mbox{if} ~({\bf x}_{\tilde{I}_\s}, \tilde{\bf y}_{\tilde{I}_\s}) \in \mathcal{T}_{\epsilon}^{(\tilde{n}_\s)}(\mP_{\!XY\mid S=\s}), \forall \s\in \mathcal{S}, \\
			0, & \mbox{otherwise.}
		\end{cases}	
\end{aligned}
\end{equation}
Although the RHS of \eqref{eq:def_T} does not explicitly contain $s^n$, we should point out that $s^n$ determines $I_{\sigma}, \forall \sigma \in \mathcal{S} \cup \{\phi\}$, and also affects $\tilde{\bf y}$ and $\tilde{I}_{\sigma}$ accordingly.

Let us show that \textit{C3} is satisfied by the design, i.e., $\mZ(x^i, \widehat{w} \mid w, s^n, y^n)$ is invariant under changes of $s_{i+1}^n$. By \eqref{eq:expectation_form} it suffices to show that $\mathbb{E}\big[ T(x^i,X_{i+1}^n, y^n, s^n) \big]$ is  invariant under changes of $s_{i+1}^n$.

To clarify notations, given $(s^i, y^i)\in \mathcal{S}^i \times \mathcal{Y}^i$,  two realizations in $\mathcal{S}^{n-i}$, say $s_{i+1}^n$ and ${s'}_{i+1}^n$, and two realizations of $\mathcal{Y}^{n-i}$, say $y_{i+1}^n$ and ${y'}_{i+1}^n$, let $s^n \triangleq (s^i, s_{i+1}^n), {s'}^n \triangleq (s^i, {s'}_{i+1}^n), y^n \triangleq (y^i, y_{i+1}^n), {y'}^n \triangleq (y^i, {y'}_{i+1}^n)$. 
Let $\tilde{s}^n$, $\tilde{y}^n$ denote the corresponding sequences mapped (according to the description above) from $s^n$ and (then) $y^n$, respectively. Let $\tilde{s'}^n$, $\tilde{y'}^n$ denote the corresponding sequences mapped from ${s'}^n$ and (then) ${y'}^n$, respectively. Then we need to show that,
\begin{align} \label{eq:same_expectation}
	\mathbb{E}\big[ T(x^i,X_{i+1}^n, y^n, s^n) \big] = \mathbb{E}\big[ T(x^i, {X'}_{i+1}^n, {y'}^n, {s'}^n) \big]
\end{align}
where $X_{i+1}^n \sim \prod_{j=i+1}^n \mP_{\!X\mid \tilde{S}}(x_j\mid \tilde{s}_j)$ and ${X'}_{i+1}^n \sim \prod_{j=i+1}^n \mP_{\!X\mid \tilde{S}}(x_j\mid \tilde{s}'_j)$. In fact, for \textit{C3} it suffices to consider ${y'}^n=y^n$. The reason we make it general is to also facilitate the proof for \textit{C2} later.
 
According to Property P2 of Algorithm \ref{alg:type_mapping},  $(\tilde{s}^i, \tilde{y}^i) = (\tilde{s'}^i, \tilde{y'}^i)$, and thus  $({\tilde{s}}_{i+1}^n, \tilde{y}_{i+1}^n)$ and $({\tilde{s'}}_{i+1}^n, \tilde{y'}_{i+1}^n)$ have the same (joint) type (see Remark \ref{rem:same_type}). This means that there is a permutation $\pi\colon [i+1:n] \to [i+1:n]$ of the indices $i+1,\ldots, n$ such that $(\tilde{s}'_k, \tilde{y}'_k) = (\tilde{s}_{\pi(k)}, \tilde{y}_{\pi(k)})$ for $k\in [i+1:n]$.

Now for a fixed  $x^i$, $y^n,s^n$, consider those $x_{i+1}^n \in \mathcal{X}^{n-i}$ for which $T(x^n, y^n, s^n) = \lambda$, i.e., the subset
\begin{align}
	\mathfrak{X}_{x^i,y^n,s^n} \triangleq \{x_{i+1}^n \colon T(x^n,y^n,s^n) = \lambda\}.
\end{align}
Since $(\tilde{s}'_k, \tilde{y}'_k) = (\tilde{s}_{\pi(k)}, \tilde{y}_{\pi(k)})$ for $k\in [i+1:n]$, those $x_{i+1}^n \in \mathcal{X}^{n-i}$ for which $T(x^n, {y'}^n, {s'}^n) = \lambda$ constitute the subset
\begin{align}
	&\mathfrak{X}_{x^i,{y'}^n,{s'}^n} \triangleq \{x_{i+1}^n \colon T(x^n,{y'}^n,{s'}^n) = \lambda\} \\
	&= \big\{{x'}_{i+1}^n \colon x_k' = x_{\pi(k)}, k\in[i+1:n],x_{i+1}^n \in \mathfrak{X}_{x^i,y^n,s^n} \big\},\notag
\end{align}
since $(X'_{i+1},\ldots, X'_{n})$ is drawn according to the same distribution as that for $(X_{\pi(i+1)},\ldots, X_{\pi(n)})$, and  $(\tilde{y}'_{i+1},\ldots, \tilde{y}'_{n}) = (\tilde{y}_{\pi(i+1)},\ldots, \tilde{y}_{\pi(n)})$. Therefore, the probability $\Pr\big(T((x^i, X_{i+1}^n),y^n, s^n)=\lambda \big)$ is the same as the probability  $\Pr\big(T((x^i, {X'}_{i+1}^n),{y'}^n, {s'}^n)=\lambda \big)$. It then follows that \eqref{eq:same_expectation} holds. Therefore, \textit {C3} is satisfied.

It remains to satisfy \textit{C2}, i.e., $\mZ(\widehat{w}\mid w,s^n,y^n)$ should be invariant under changes of $(w, s^n)$. Note that $\mZ(\widehat{w}\mid w,s^n,y^n)$ corresponds to the LHS of \eqref{eq:expandZ} when we set $i=0$.
In \eqref{eq:expectation_form} note that as a consequence of the above analysis, $\mathbb{E}\big[ T(X^n, y^n, s^n) \big]$ is invariant under the changes of $(s^n,y^n)$, since the resulting  $(\tilde{s}^n, \tilde{y}^n)$ will only have one type. However, \eqref{eq:expectation_form} may depend on $w$ because of its two forms depending on whether $\widehat{w}=w$ or $\widehat{w}\neq w$, whereas \textit{C2} requires that the value should not depend on $w$. To  remove the dependence on $w$,  it would be sufficient to have
\begin{equation}
	\mathbb{E}\big[ T(X^n, y^n, s^n) \big] = \tfrac{1}{M}, ~~~~ \forall y^n\in \mathcal{Y}^n, s^n \in \mathcal{S}^n, \label{eq:C1_authentication} 
\end{equation}
because this ensures that $\mathbb{E}\big[ T(X^n, y^n, s^n) \big]=\frac{1}{M-1}(1-\mathbb{E}\big[ T(X^n, y^n, s^n) \big])=1/M$.

This can be done by adjusting the value of the integer $M$ and $\lambda\in (0,1]$. 
Specifically, for $s^n \in \mathcal{S}^n$ and $y^n \in \mathcal{Y}^n$, 
\begin{equation}\label{eq:C1_intermediate}
\begin{aligned}
	&\mathbb{E}\big[ T(X^n, y^n, s^n) \big] \\
	&= \lambda \times \underbrace{\Pr\big( ({\bf X}_{\tilde{I}_\s}, \tilde{\bf y}_{\tilde{I}_\s}) \in \mathcal{T}_{\epsilon}^{(\tilde{n}_\s)}(\mP_{\!XY\mid S=\s}),\forall \s\in \mathcal{S} \big)}_{\triangleq (\mu)^{-1}} 
\end{aligned}
\end{equation} 
where $\mu$ is a constant for all $s^n \in \mathcal{S}^n, y^n \in \mathcal{Y}^n$, and  ${\bf X} = X^n \sim \prod_{j=1}^n \mP_{\!X\mid \tilde{S}}(x_j\mid \tilde{s}_j)$.  
Now, to satisfy \eqref{eq:C1_authentication}, we set
\begin{align}
M = \lceil \mu \rceil, ~~
	\lambda = \frac{\mu}{M}.\label{eq:deflambda}
\end{align}
Thus, $\mZ(\widehat{w} \mid w, s^n, y^n)=\tfrac{1}{M}$ for all $w,\widehat{w}, s^n, y^n$, and \textit{C2} is satisfied. A valid coding scheme $\mZ$ is now fully specified. We illustrate the construction in Fig. \ref{fig:authentication}.

\begin{figure}[t]
\centering
\begin{tikzpicture}
\def\w{1}
\def\h{0.3}
\def\hh{0.9}
\fill[gray!10] (-0.5*\w,0) -- (1*\w,0) -- (1*\w,-\h) -- (2*\w,-\h) -- (2*\w,-2*\h) -- (2*\w+0.25,-2*\h) decorate[decoration = {zigzag, amplitude=2pt, segment length=6pt}] {--  (2*\w+0.75,-2*\h)}-- (3*\w,-2*\h) -- (3*\w,-3*\h) -- (4*\w,-3*\h) -- (4*\w, 0) -- (11*\w, 0) -- (11*\w, -12*\h-\hh) -- (2*\w+0.75, -12*\h-\hh) decorate[decoration = {zigzag, amplitude=2pt, segment length=6pt}] {--  (2*\w+0.25,-12*\h-\hh)} -- (-0.5*\w, -12*\h-\hh) -- (-0.5*\w,0);
\draw[thick] (-0.5*\w,0) -- (1*\w,0) -- (1*\w,-\h) -- (2*\w,-\h) -- (2*\w,-2*\h) -- (2*\w+0.25,-2*\h) decorate[decoration = {zigzag, amplitude=2pt, segment length=6pt}] {--  (2*\w+0.75,-2*\h)}-- (3*\w,-2*\h) -- (3*\w,-3*\h) -- (4*\w,-3*\h) -- (4*\w, 0) -- (11*\w, 0) -- (11*\w, -12*\h-\hh) -- (2*\w+0.75, -12*\h-\hh) decorate[decoration = {zigzag, amplitude=2pt, segment length=6pt}] {--  (2*\w+0.25,-12*\h-\hh)} -- (-0.5*\w, -12*\h-\hh) -- (-0.5*\w,0);

\node (w) at (-0.2,0.5) [] {$w$}; \draw[thick, -{latex}] (w)-- ($(w.south)+(0,-0.3)$);
\node (s1) at (0.5,0.5) [] {$s_1$}; \draw[thick, -{latex}] (s1)-- ($(s1.south)+(0,-1.75)$);  \draw[thick, -{latex}] ($(s1.south)+(0,-2.4)$) -- ($(s1.south)+(0,-3.25)$) node[pos=0.4, left=-0.1] {\small $\tilde{s}_1$};
\node (s2) at (1.5,0.5) [] {$s_2$}; \draw[thick, -{latex}] (s2)-- ($(s2.south)+(0,-1.75)$); \draw[thick, -{latex}] ($(s2.south)+(0,-2.4)$) -- ($(s2.south)+(0,-3.25)$) node[pos=0.4, right=-0.05] {\small $\tilde{s}_2$};
\node (sn) at (3.5,0.5) [] {$s_n$}; \draw[thick, -{latex}] (sn)-- ($(sn.south)+(0,-1.75)$); \draw[thick, -{latex}] ($(sn.south)+(0,-2.4)$) -- ($(sn.south)+(0,-3.25)$) node[pos=0.4, left=-0.1] {\small $\tilde{s}_n$};

\node[rectangle, draw,  thick, minimum width = 4cm, minimum height=0.6cm, fill=black!10] at (2,-1.8) (Alg1) {Alg. \ref{alg:type_mapping}};

\node[rectangle, draw,  thick, minimum height=0.6cm, fill=black!10] at (0.4,-3.3) {\footnotesize $\mP_{\!X\mid \tilde{S}}$}; \draw[thick, -{latex}] ($(s1.south)+(0,-3.85)$) -- ($(s1.south)+(0,-5.1)$); \node at ($(s1.south)+(0,-5.3)$) {\small $X_1$};
\node[rectangle, draw,  thick, minimum height=0.6cm, fill=black!10] at (1.6,-3.3) {\footnotesize $\mP_{\!X\mid \tilde{S}}$}; \draw[thick, -{latex}] ($(s2.south)+(0,-3.85)$) -- ($(s2.south)+(0,-5.1)$); \node at ($(s2.south)+(0,-5.3)$) {\small $X_2$};
\node[rectangle, draw,  thick, minimum height=0.6cm, fill=black!10] at (3.5,-3.3) {\footnotesize $\mP_{\!X\mid \tilde{S}}$}; \draw[thick, -{latex}] ($(sn.south)+(0,-3.85)$) -- ($(sn.south)+(0,-5.1)$); \node at ($(sn.south)+(0,-5.3)$) {\small $X_n$};

\node at (7.7,-5) (y) {$y^n$}; \draw[thick, -{latex}] ($(y.north)+(0,0)$) -- ($(y.north)+(0,0.5)$); 
\node[rectangle, draw,  thick, minimum height=0.6cm, minimum width=5.2cm, fill=black!10, above right = 0.5cm and -2.8cm of y] {\footnotesize Permute based on $(I_{\sigma})_{\sigma \in \mathcal{S} \cup \{\phi\}}$};

\draw [thick, -{latex}] ($(y.north)+(-1.8,1.15)$) -- ($(y.north)+(-1.8,1.65)$) node[pos=0.4, left=-0.1] {\small ${\bf y}_{I_1}$};
\draw [thick, -{latex}] ($(y.north)+(0,1.15)$) -- ($(y.north)+(0,1.65)$) node[pos=0.4, left=-0.1] {\small ${\bf y}_{I_2}$};
\draw [thick, -{latex}] ($(y.north)+(2.2,1.15)$) -- ($(y.north)+(2.2,1.65)$) node[pos=0.4, left=-0.1] {\small ${\bf y}_{I_{|\mathcal{S}|}}$};

\node[rectangle, draw,  thick,  minimum height=0.6cm, fill=black!10] at (5.8,-2.75)  {\small Alg. \ref{alg:type_mapping}}; \draw [thick, -{latex}] ($(y.north)+(-1.8,2.25)$) -- ($(y.north)+(-1.8,2.8)$) node[pos=0.4, left=-0.1] {\small $\tilde{\bf y}_{I_1}$};
\node[rectangle, draw,  thick,  minimum height=0.6cm, fill=black!10] at (7.7,-2.75)  {\small Alg. \ref{alg:type_mapping}}; \draw [thick, -{latex}] ($(y.north)+(0,2.25)$) -- ($(y.north)+(0,2.8)$) node[pos=0.4, left=-0.1] {\small $\tilde{\bf y}_{I_2}$};
\node[rectangle, draw,  thick,  minimum height=0.6cm, fill=black!10] at (9.8,-2.75)  {\small Alg. \ref{alg:type_mapping}}; \draw [thick, -{latex}] ($(y.north)+(2.2,2.25)$) -- ($(y.north)+(2.2,2.8)$) node[pos=0.4, left=-0.1] {\small $\tilde{\bf y}_{I_{|\mathcal{S}|}}$};

\node (J) [rectangle, rounded corners, fill=gray!20, thick, draw] at (7.8,-1.55) {\scriptsize Is $\big({\bf X}_{\tilde{I}_\sigma}, \tilde{\bf y}_{\tilde{I}_\sigma}\big) \in \mathcal{T}^{(\tilde{n}_\s)}_{\epsilon}, \forall \sigma \in \mathcal{S}$ and $\Lambda=1$?};

\draw [thick, -{latex}] ($(J.north)+(-2,0)$) -- ($(J.north)+(-2,0.5)$) node[pos=0.4, left=0] {\scriptsize Yes}; \node at ($(J.north)+(-2,0.75)$) {\scriptsize $\widehat{W}=w$};

\draw [thick, -{latex}] ($(J.north)+(1,0)$) -- ($(J.north)+(1,0.5)$) node[pos=0.4, right=0] {\scriptsize No}; \node at ($(J.north)+(1.5,0.75)$) {\scriptsize $\widehat{W} \sim {\rm Unif}([M]\setminus \{w\})$};

\draw [thick, -{latex}] (7.5,0) -- (7.5,0.3);
\node at (7.52,0.55) {\footnotesize $\widehat{W}$};

\node at (2.5,0.5) {\footnotesize $\cdots$};
\node at (2.55,-3.3) {\footnotesize $\cdots$};
\node at (2.5,-5) {\footnotesize $\cdots$};
\node at (8.8,-2.75) {\footnotesize $\cdots$};

\end{tikzpicture}
\caption{Illustration of the scheme construction. We assume the channel state alphabet $\mathcal{S} = \{1,2,\ldots, |\mathcal{S}|\}$. $\Lambda$ is an independent classical random variable internally generated by the scheme with $\Pr(\Lambda =1) = \lambda$ and $\Pr(\lambda =0) = 1-\lambda$.}
\label{fig:authentication}
\end{figure}

\subsection{Analysis}
First, consider the expectation in \eqref{eq:C1_intermediate}.
Fixing $s^n \in \mathcal{S}^n, y^n \in \mathcal{Y}^n$, for each $\sigma \in \mathcal{S}$, ${\bf X}_{\tilde{I}_{\s}}$ are generated i.i.d. according to $\mP_{\!X\mid S=\sigma}$. According to the Joint Typicality Lemma \cite{NIT}, it holds for any  sequence in $\mathcal{Y}^{\tilde{n}_{\s}}$ (and thus for $\tilde{\bf y}_{\tilde{I}_{\s}}$) that
\begin{align}
	\Pr\big( ({\bf X}_{\tilde{I}_{\s}}, \tilde{\bf y}_{\tilde{I}_{\s}}) \in \mathcal{T}_{\epsilon}^{(\tilde{n}_{\s})} \big) \leq 2^{-\tilde{n}_{\s}(I(X;Y\mid S=\s)-\delta(\epsilon))},\label{eq:MIboundforeachsigma}
\end{align}
where $\delta(\epsilon) \to 0$ as $\epsilon \to 0$. 
According to Algorithm \ref{alg:type_mapping}, for $\sigma \in \mathcal{S}$,
\begin{align}
	\tilde{n}_{\sigma} &= \sum_{y\in \mathcal{Y}} \lfloor n_{\s}   (1-\epsilon) \mP_{\!Y\mid S=\s}(y) \rfloor \\
	&\geq \sum_{y\in \mathcal{Y}} \big( n_{\s}   (1-\epsilon) \mP_{\!Y\mid S=\s}(y)   - 1 \big)\\
	&= n_{\s}  (1-\epsilon)- o(n)
\end{align}
Meanwhile, 
\begin{align}
	n_\s = \lfloor n(1-\epsilon) \mP_{\!S}(\s) \rfloor \geq n(1-\epsilon) \mP_{\!S}(\s)-o(n).
\end{align}
It follows that
\begin{align} \label{eq:relation_tilde}
	\tilde{n}_\s \geq   n(1-\epsilon)^2 \mP_{\!S}(\sigma) - o(n).
\end{align}
Consider all $\sigma \in \mathcal{S}$ and note that $X_1,X_2,\ldots, X_n$ are mutually independent given $s^n$. We have
\begin{align}
	M^{-1}&\stackrel{\eqref{eq:deflambda}}{\leq} (\mu)^{-1} =\prod_{\s\in\mathcal{S}}\Pr\big( ({\bf X}_{\tilde{I}_{\s}}, \tilde{\bf y}_{\tilde{I}_{\s}}) \in \mathcal{T}_{\epsilon}^{(\tilde{n}_{\s})} \big)\\
	&\stackrel{\eqref{eq:MIboundforeachsigma}}{\leq}  \prod_{\s\in \mathcal{S}} 2^{-\tilde{n}_{\s} (I(X;Y\mid S=\s)-\delta(\epsilon))}\\
	&\stackrel{\eqref{eq:relation_tilde}}{\leq} \prod_{\s\in \mathcal{S}} 2^{-n(1-\epsilon)^2\mP_{\!S}(\sigma)(I(X;Y\mid S=\s) - \delta(\epsilon) )   +o(n)} \\
	& = 2^{-n (1-\epsilon)^2 (I(X;Y\mid S) -\delta(\epsilon) )+o(n)}
\end{align}
Therefore, asymptotically the rate of the coding scheme is bounded below as,
\begin{align}
	\lim_{n\to \infty} \frac{\log M}{n} \geq  (1-\epsilon)^2 \big(I(X;Y\mid S)-\delta(\epsilon)
	\big)
\end{align} 
which approaches $I(X;Y\mid S)$ for $\epsilon\rightarrow 0$.

Next we analyze the probability of success associated with the scheme. According to \eqref{eq:prob_succ_NS}, \eqref{eq:factorZ}, \eqref{eq:NSscheme_1}, \eqref{eq:NSscheme_1_final} and \eqref{eq:NSscheme_2}, this is
\begin{align}
	 \eta({\sf Z}) &= \sum_{x^n, y^n, s^n}\prod_{i=1}^n \mP_{\!S}(s_i) \cdot \mN (y_i\mid x_i, s_i) \cdot \mP_{\!X\mid \tilde{S}}(x_i\mid \tilde{s}_i) \cdot T(x^n,y^n,s^n) \\
	 &= \mathbb{E}\big[ T(X^n, Y^n, S^n) \big] \label{eq:prob_success_interm}
\end{align}
where for the expectation in \eqref{eq:prob_success_interm}, $(X^n,Y^n,S^n) \sim \prod_{i=1}^n \mP_{\!S}(s_i) \mP_{\!X\mid \tilde{S}}(x_i \mid \tilde{s}_i) \mN(y_i\mid x_i,s_i)$.

It is worthwhile to note the distinction between the $Y^n$ in \eqref{eq:prob_success_interm} and the $y^n$ in  \eqref{eq:C1_authentication}.  The random variable $Y^n$ is the output of the channel $\mN$ in state $S^n=s^n$ with input $X^n$. In particular $Y^n$ and $X^n$ are not independent. In contrast $y^n$ in \eqref{eq:C1_authentication} is an arbitrary sequence. Thus, even though for every $y^n$, we have $\mathbb{E}\big[ T(X^n, y^n, s^n) \big] = \tfrac{1}{M}$ according to \eqref{eq:C1_authentication}, one should not conclude that  $\mathbb{E}\big[ T(X^n, Y^n, S^n) \big] = \tfrac{1}{M}$ in \eqref{eq:prob_success_interm}. In fact $\eta(\mZ)=\mathbb{E}\big[ T(X^n, Y^n, S^n) \big]$ approaches $1$ as $n\to \infty$, as we prove next.

To continue, since $S^n$ is random, the output of Algorithm \ref{alg:type_mapping} for input $S^n$ is a random sequence $\tilde{S}^n$, together with a random  ${\sf flag}[S^n]$. Moreover, $Y^n$ is random, and the output of Algorithm \ref{alg:type_mapping} with input ${\bf Y}_{I_{\s}}$ for each $\sigma \in \mathcal{S}$ is also random, denoted as $\widetilde{\bf Y}_{I_{\sigma}}$, together with a random ${\sf flag}[Y^n\mid S=\sigma]$. Define a random variable $F$ to indicate whether all $(1+|\mathcal{S}|)$ flags are $1$, as
\begin{align}
	F \triangleq \mathbb{I}\big[{\sf flag}[S^n]=1\big] \times \prod_{\s \in \mathcal{S}}  \mathbb{I}\big[ {\sf flag}[Y^n\mid S=\sigma] =1 \big].
\end{align}
We continue from \eqref{eq:prob_success_interm} as,
\begin{align}
	 \eta(\sf Z)&= \lambda \times \Pr\big( ({\bf X}_{\tilde{I}_{\s}}, \widetilde{\bf Y}_{\tilde{I}_{\s}}) \in \mathcal{T}_{\epsilon}^{(\tilde{n}_\sigma)}, \forall \s\in \mathcal{S} \big) \\
	&\geq  \lambda\times \Pr(F = 1)\times  \notag \\
	&\qquad \Pr\big( ({\bf X}_{\tilde{I}_{\s}}, \widetilde{\bf Y}_{\tilde{I}_{\s}}) \in \mathcal{T}_{\epsilon}^{(\tilde{n}_{\s})}, \forall \s\in \mathcal{S} \mid F = 1\big) \label{eq:prob_success_interm2}
\end{align}
According to the definition of strong typical sets in \eqref{eq:def_strong_typical_set}, if $S^n \in \mathcal{T}^{(n)}_{\epsilon}(\mP_{\!S})$, then ${\bf N}_{S^n}(\sigma) \geq n(1-\epsilon)\mP_{\!S}(\sigma) \geq \lfloor n(1-\epsilon)\mP_{\!S}(\sigma) \rfloor$, and  Remark \ref{rem:flag} then says that ${\sf flag}[S^n]=1$. Since $\Pr\big(S^n \in \mathcal{T}^{(n)}_{\epsilon}\big) \to 1$ as $n\to \infty$, we have
\begin{align} \label{eq:flag1}
	\Pr({\sf flag}[S^n]=1) \to 1
\end{align}
as $n\to \infty$. 
Then conditioned on ${\sf flag}[S^n]=1$, we have that $\tilde{\bf S}_{I_{\s}} = {\bf S}_{I_{\s}}$ for every $\s\in \mathcal{S}$, meaning that $\tilde{S}^n$ is unchanged from $S^n$ at the positions indexed by $I_{\sigma}$ for every $\s\in \mathcal{S}$. This means that conditioned on ${\sf flag}[S^n]=1$ and $S^n=s^n$, for each $\sigma \in \mathcal{S}$, at the positions $j \in I_{\sigma}$  we have $(X_j, Y_j) \sim  \mP_{\!X\mid S=\sigma} \mN_{Y\mid X, S=\sigma} $, and thus $Y_j \sim \mP_{\!Y\mid S=\sigma}$. Moreover, if ${\bf Y}_{I_{\s}} \in \mathcal{T}_{\epsilon}^{(n_{\s})}(\mP_{\!Y\mid S=\s})$, then we similarly have ${\sf flag}[Y^n \mid S=\s] = 1$ for $\s\in \mathcal{S}$. Since $\Pr\big({\bf Y}_{I_{\s}} \in \mathcal{T}_{\epsilon}^{(n_{\s})}\big) \to 1$ as $n\to \infty$, we have
\begin{align} \label{eq:flag2}
	\Pr\big( {\sf flag}[Y^n\mid S=\s]  \mid {\sf flag}[S^n] \big) \to 1, \forall \s\in \mathcal{S}.
\end{align}
Combining \eqref{eq:flag1} and \eqref{eq:flag2}, it follows that 
\begin{align}
	\Pr(F=1) \to 1
\end{align}
as $n\to \infty$. 

Note that conditioned on $F=1$, $\widetilde{\bf Y}_{\tilde{I}_{\s}}={\bf Y}_{\tilde{I}_{\s}}$ for all $\s\in \mathcal{S}$. We continue from \eqref{eq:prob_success_interm2} that
\begin{align}
	&\eqref{eq:prob_success_interm2} = \lambda\times \Pr(F = 1)\times  \notag \\
	&\qquad \Pr\big( ({\bf X}_{\tilde{I}_{\s}}, {\bf Y}_{\tilde{I}_{\s}}) \in \mathcal{T}_{\epsilon}^{(\tilde{n}_{\s})}, \forall \s\in \mathcal{S} \mid F = 1\big) 
\end{align}
and the last factor approaches $1$ as $\tilde{n}_{\s} \to 1$ for all $\sigma \in \mathcal{S}$. This is guaranteed if $n \to \infty$ due to \eqref{eq:relation_tilde}. We conclude that $\eta(\mZ) \to 1$ as $n\to \infty$. \hfill \qed

\section{Conclusion}
In this work, we formalize and study the NS-assisted capacity of a discrete memoryless classical channel with causal channel state information at the transmitter (CSIT).
We show that when NS assistance is available, the capacity with causal CSIT coincides with that with non-causal CSIT, and that both are equal to the classical capacity of the corresponding channel with channel state information also available at the receiver (CSIR). However, unlike the non-causal CSIT case where a stronger equivalence at the level of probability of success has been previously established, we demonstrate with a counterexample that with causal CSIT that stronger equivalence does not hold. Separately, we show that NS assistance, feedback, and \emph{strictly} causal CSIT (i.e., transmitter knows past channel states but not the present or future states), even acting collectively, cannot increase the capacity of a discrete memoryless channel.

\appendix

\section{Proof of Theorem \ref{thm:Prob}} \label{proof:Prob}
Recall that $(\mN, \mP_{\!S})$ is the $Z_0/Z_1$ channel in Definition \ref{def:Z0Z1}. Let us first show that
\begin{equation*}
	{\eta}^{\C,\ca}_{\opt, M=2,n=2}(\mN^{\CSIR},\mP_{\!S}) \geq 7/8,
\end{equation*}
i.e., when sending a one-bit message over $2$ channel uses, the optimal probability of success for \emph{classical} coding schemes with causal CSIT and CSIR is at least $7/8$. For each channel use, since both the transmitter and the receiver know the current channel state, they can convert the channel always to $Z_0$ (by switching $1$ and $0$ at both the input and output $Z_1$ is transformed into $Z_0$). Therefore, we can design coding schemes assuming they have 2 uses of  the channel $Z_0$. We then adopt the following classical strategy. For encoding, the encoder maps $W=0$ to $(X_1,X_2) = (0,0)$ and $W=1$ to $(X_1,X_2)=(1,1)$. For decoding, the decoder maps $(Y_1,Y_2) = (0,0)$ to $\widehat{W}=0$ and $(Y_1,Y_2) \in \{(0,1),(1,0),(1,1)\}$ to $1$. Then conditioned on $W=0$, we always have $\widehat{W}=0=W$, and conditioned on $W=1$, we have $\Pr(\widehat{W}=1 \mid W=1) = \Pr((Y_1,Y_2) \not=(0,0) \mid (X_1,X_2) = (1,1)) = 3/4$. Thus, the probability of success for this strategy is $\frac{1}{2}(1+\frac{3}{4}) = 7/8$.

Next, we show that ${\eta}^{\NS,\ca}_{\opt,M=2,n=2}(\mN, \mP_{\!S}) \leq 13/16$, i.e., for the same $M,n$ as above, the optimal probability of success for NS-assisted coding schemes with causal CSIT cannot be more than $13/16$. 
We first derive a linear program for computing ${\eta}_{\opt,M,n}^{\NS,\ca}(\mN, \mP_{\!S})$ for a general channel with state.
Without loss of generality, say $0\in \mathcal{S}\cap \mathcal{Y}$. For $n\in \mathbb{N}$ let $0^n$ denote the all-$0$ sequence of length $n$. By writing $\mZ(x^n,\widehat{w}\mid w,s^n,y^n) = z_{x^n,\widehat{w}\mid w,s^n,y^n}$, we obtain that ${\eta}^{\NS,\ca}_{\opt,M,n}(\mN, \mP_{\!S})$ is the solution to the following linear program (LP1), with the set of variables $\{z_{x^n,\widehat{w}\mid w,s^n,y^n} \colon x^n\in \mathcal{X}^n, \widehat{w} \in [M], w \in [M], s^n \in \mathcal{S}^n, y^n \in \mathcal{Y}^n\}$,
\begin{align}
\mbox{maximize}\quad & \frac{1}{M}\sum_{w,s^n,x^n,y^n} z_{x^n,w\mid w,s^n,y^n} \cdot \mP_{\!S}^{\otimes n}(s^n) \cdot \mN^{\otimes n}(y^n\mid x^n,s^n)\\
\mbox{s.t.}\quad
& z_{x^n, \widehat{w}\mid w,s^n,y^n}  \geq 0, ~\forall (x^n,\widehat{w}, w,s^n, y^n) \label{eq:LP_norm1} \\
& \sum_{x^n,\widehat{w}} z_{x^n, \widehat{w}\mid w,s^n,y^n} = 1, ~\forall (w,s^n,y^n) \label{eq:LP_norm2} \\
&\sum_{\widehat{w}} z_{x^n, \widehat{w}\mid w,s^n,y^n} = \sum_{\widehat{w}} z_{x^n, \widehat{w}\mid w,s^n,y^n=0^n}, \notag \\& \qquad \forall (x^n,w,s^n,y^n) \label{eq:LP_NS1} \\
&\sum_{x^n} z_{x^n, \widehat{w}\mid w,s^n,y^n} = \sum_{x^n} z_{x^n, \widehat{w}\mid w=1,s^n=0^n,y^n}, \notag \\& \qquad \forall (\widehat{w},w, s^n ,y^n) \label{eq:LP_NS2} \\
&\sum_{x_{i+1}^n} z_{x^n, \widehat{w}\mid w,(s^i,s_{i+1}^n),y^n} = \sum_{x_{i+1}^n} z_{x^n, \widehat{w}\mid w,(s^i, s_{i+1}^n = 0^{n-i}),y^n},  \notag \\
&\qquad \forall i \in [n-1], \forall (x^i, \widehat{w}, w,s^n, y^n) \label{eq:LP_causal}
\end{align}
Conditions \eqref{eq:LP_norm1} and \eqref{eq:LP_norm2} state that $\mZ(x^n,\widehat{w}\mid w,s^n,y^n) = z_{x^n,\widehat{w}\mid w,s^n,y^n}$ is a valid conditional distribution, whereas Conditions \eqref{eq:LP_NS1}, \eqref{eq:LP_NS2} and \eqref{eq:LP_causal} correspond to the non-signaling and the causality conditions \textit{C1}--\textit{C3}.

Similar to the simplification steps used in \cite{matthews2012linear, fawzi2024MAC, fawzi2024broadcast}, we can write the above linear program as another smaller linear program (LP2) with a set of fewer variables $\{r_{x^n,y^n,s^n}\colon x^n\in \mathcal{X}^n,y^n\in \mathcal{Y}^n, s^n \in \mathcal{S}^n\} \cup \{q_{x^n\mid s^n}\colon x^n \in \mathcal{X}^n, s^n \in \mathcal{S}^n\}$. The simplified linear program is as follows.
\begin{align}
\mbox{maximize}\quad &  \sum_{x^n,y^n,s^n} r_{x^n,y^n,s^n}\cdot \mP_{\!S}^{\otimes n}(s^n) \cdot \mN^{\otimes n}(y^n\mid x^n,s^n) \\
\mbox{s.t.}\quad
& r_{x^n,y^n,s^n}  \geq 0, ~\forall (x^n, y^n, s^n) \label{eq:LP2_norm1} \\
& \sum_{x^n} r_{x^n,y^n,s^n} = \tfrac{1}{M}, ~\forall (s^n,y^n) \label{eq:LP2_norm2} \\
& \sum_{x^n} q_{x^n\mid s^n} = 1, ~\forall s^n \label{eq:LP2_norm3} \\
& r_{x^n,y^n,s^n} \leq q_{x^n\mid s^n}, ~\forall (x^n,y^n, s^n) \\
& \sum_{x_{i+1}^n}r_{x^n,y^n, (s^i, s_{i+1}^n)} = \sum_{x_{i+1}^n}r_{x^n,y^n,(s^i, s_{i+1}^n = 0^{n-i})}, \notag \\ & \qquad \forall i\in [n-1], \forall (x^i, y^n, s^n)\\
& \sum_{x_{i+1}^n}q_{x^n\mid (s^i, s_{i+1}^n)} = \sum_{x_{i+1}^n}q_{x^n\mid (s^i, s_{i+1}^n = 0^{n-i})} \notag \\ & \qquad \forall i \in [n-1], \forall (x^i, s^n)
\end{align}
Note that given any solution of LP1, the following choice of variables is a valid solution of LP2 achieving the same objective value.
\begin{align}
	&r_{x^n, y^n,s^n} = \frac{1}{M} \sum_{w\in [M]} z_{x^n,w\mid w, s^n,y^n}, \\
	&q_{x^n\mid s^n} = \frac{1}{M} \sum_{w\in [M],\widehat{w} \in [M]} z_{x^n,\widehat{w}\mid w, s^n,y^n}.
\end{align}
For the other direction, given any solution of LP2, the following choice of variables is a valid solution of LP1 achieving the same objective value.
\begin{align}
	z_{x^n,\widehat{w}\mid w,s^n,y} = 
	\begin{cases}
		r_{x^n,y^n,s^n}, & \widehat{w}=w\\
		\tfrac{1}{M-1}(q_{x^n\mid s^n} - r_{x^n,y^n,s^n}), & \widehat{w}\not= w
	\end{cases}.
\end{align} 
This shows that LP1 and LP2 have exactly the same optimal value. 
Now for the $Z_0/Z_1$ channel defined in Definition \ref{def:Z0Z1}, and $M=2,n=2$, LP2 becomes
\begin{align}
\mbox{maximize}\quad & \frac{1}{4} \sum_{x_1,x_2,y_1,y_2,s_1,s_2} r_{x_1,x_2,y_1,y_2,s_1,s_2}  \mN(y_1\mid x_1,s_1)\mN(y_2\mid x_2,s_2)\\
\mbox{s.t.}\quad
& r_{x_1,x_2,y_1,y_2,s_1,s_2}  \geq 0, ~ \forall x_1,x_2,y_1,y_2,s_1,s_2 \\
& \sum_{x_1,x_2} r_{x_1,x_2,y_1,y_2,s_1,s_2} = \frac{1}{2},~ \forall y_1,y_2,s_1,s_2  \label{eq:LP2_ex_normal_1} \\
& \sum_{x_1,x_2} q_{x_1,x_2\mid s_1,s_2} = 1, ~ \forall s_1,s_2 \label{eq:LP2_ex_normal_2}  \\
& r_{x_1,x_2,y_1,y_2,s_1,s_2} \leq q_{x_1,x_2\mid s_1,s_2}, ~ \forall x_1,x_2,y_1,y_2,s_1,s_2 \label{eq:LP2_ex_RleqQ} \\
& \sum_{x_2}r_{x_1,x_2,y_1,y_2,s_1,s_2=0} = \sum_{x_2}r_{x_1,x_2,y_1,y_2,s_1,s_2=1}, ~ \forall x_1,y_1,y_2,s_1 \label{eq:LP2_ex_causal_1} \\
& \sum_{x_2}q_{x_1,x_2\mid s_1,s_2=0} = \sum_{x_2}q_{x_1,x_2\mid s_1,s_2=1}~ \forall x_1,y_1,y_2,s_1  \label{eq:LP2_ex_causal2}
\end{align}

To obtain an upper bound for LP2, we drop the conditions in \eqref{eq:LP2_ex_causal2}, since this can only enlarge the feasible region, and thus the maximum value cannot be smaller. Let this linear program be LP3.
Then, we introduce Lagrangian multipliers $\lambda_{y_1,y_2,s_1,s_2}$ corresponding to the conditions in \eqref{eq:LP2_ex_normal_1}, $\mu_{s_1,s_2}$ corresponding to the conditions in \eqref{eq:LP2_ex_normal_2}, $\xi_{x_1,y_1,y_2,s_1}$ corresponding to the conditions in \eqref{eq:LP2_ex_causal_1}, and $\eta_{x_1,x_2,y_1,y_2,s_1,s_2}$ corresponding to the conditions in \eqref{eq:LP2_ex_RleqQ}.
The dual program thus created is
\begin{align}
	\mbox{minimize} \quad & \frac{1}{2}  \sum_{y_1,y_2,s_1,s_2} \lambda_{y_1,y_2,s_1,s_2} + \sum_{s_1,s_2} \mu_{s_1,s_2}\\
	\mbox{s.t.}\quad 
	& \lambda_{y_1,y_2,s_1,s_2} + (-1)^{s_2}\xi_{x_1,y_1,y_2,s_1} + \eta_{x_1,x_2,y_1,y_2,s_1,s_2} \geq \frac{1}{4}\mN(y_1\mid x_1,s_1)\mN(y_2\mid x_2,s_2)\notag \\
	& \hspace{2cm} \forall x_1,x_2,y_1,y_2,s_1,s_2\\
	& \mu_{s_1,s_2} \geq \sum_{y_1,y_2} \eta_{x_1,x_2,y_1,y_2,s_1,s_2}, ~\forall x_1,x_2,s_1,s_2\\
	& \eta_{x_1,x_2,y_1,y_2,s_1,s_2} \geq 0, ~ \forall x_1,x_2,y_1,y_2,s_1,s_2
\end{align}
Denote this linear program as LP4. Due to the weak duality theorem, LP4 provides an upper bound for LP3. In particular, the objective value for LP4 evaluated at any feasible point yields an upper bound on ${\eta}_{\opt,M=2,n=2}^{\NS, \ca}(\mN,\mP_{\!S})$. It can be verified that following equations specify a feasible point of LP4 (variables that are not specified in \eqref{eq:feasible_values_start}--\eqref{eq:feasible_values_end} are $0$),
\begin{align}\label{eq:feasible_values_start}
	\begin{bmatrix}
		\lambda_{0,0,0,1,0,0}\\
		\lambda_{0,0,1,0,0,0}\\
		\lambda_{0,1,0,1,0,0}\\
		\lambda_{1,0,0,1,0,0}\\
		\lambda_{1,0,1,0,0,0}\\
		\lambda_{1,1,1,0,0,0}
	\end{bmatrix}
	=
	\begin{bmatrix}
		3/16\\ 1/16 \\ 3/16 \\ 1/16 \\ 3/16 \\ 3/16
	\end{bmatrix},
\end{align}
\begin{align}
	\begin{bmatrix}
		\mu_{0,0}\\
		\mu_{0,1}\\
		\mu_{1,0}\\
		\mu_{1,1}
	\end{bmatrix}
	=
	\begin{bmatrix}
		1/8\\ 1/16\\ 1/8 \\ 1/16
	\end{bmatrix},
\end{align}
\begin{align}
	\begin{bmatrix}
		\xi_{0,0,0,0}\\
		\xi_{0,0,1,1}\\
		\xi_{0,1,0,0}\\
		\xi_{0,1,0,1}\\
		\xi_{0,1,1,1}\\
		\xi_{1,0,0,0}\\
		\xi_{1,0,0,1}\\
		\xi_{1,0,1,0}\\
		\xi_{1,1,0,0}\\
		\xi_{1,1,0,1}\\
		\xi_{1,1,1,0}\\
		\xi_{1,1,1,1}
	\end{bmatrix}
	=
	\begin{bmatrix}
		1/8 \\-1/16 \\ 1/16 \\ -1/16 \\ -1/8 \\ 1/8 \\ -1/16 \\ 1/16 \\ 1/16 \\ -1/16 \\ -1/16 \\ -3/16
	\end{bmatrix},
\end{align}
\begin{align} \label{eq:feasible_values_end}
	\begin{bmatrix}
		\eta_{0,0,0,0,0,0}\\
		\eta_{0,0,0,0,0,1}\\
		\eta_{0,0,0,0,1,0}\\
		\eta_{0,0,0,0,1,1}\\
		\eta_{0,0,0,1,1,0}\\
		\eta_{0,1,0,1,0,0}\\
		\eta_{0,1,0,1,0,1}\\
		\eta_{0,1,0,1,1,0}\\
		\eta_{0,1,0,1,1,1}\\
		\eta_{1,0,1,0,0,0}\\
		\eta_{1,0,1,0,0,1}\\
		\eta_{1,0,1,0,1,0}\\
		\eta_{1,0,1,0,1,1}\\
		\eta_{1,0,1,1,0,0}\\
		\eta_{1,1,1,1,0,0}\\
		\eta_{1,1,1,1,0,1}\\
		\eta_{1,1,1,1,1,0}\\
		\eta_{1,1,1,1,1,1}
	\end{bmatrix}
	=
	\begin{bmatrix}
		1/8\\1/16\\1/16\\1/16\\1/16\\1/8\\1/16\\1/8\\1/16\\1/16\\1/16\\1/8\\1/16\\1/16\\1/8\\1/16\\1/8\\1/16
	\end{bmatrix},
\end{align}
and that the objective value of LP4 attained at this point is $13/16$.
We thus conclude the proof. \hfill \qed

\section{Example}\label{sec:toyexample}
In this section, we provide an example of a NS-assisted coding scheme with causal CSIT.  Let us consider the channel $\mN$ to be defined by $\mN(y\mid x,s) = \mathbb{I}[y=xs]$, for $x\in \{0,1\}, y\in \{0,1\}, s\in \{0,1\}$. 
Equivalently, the input-output relationship of this channel can be written as $Y=XS$, i.e., $Y$ equals the product of $X$ and $S$. Suppose the channel $\mN$ is used $3$ times. 

The aim of this example is to show the construction of a NS-assisted coding scheme with causal CSIT. For this toy example, let us deviate from our standard problem formulation in order to avoid an asymptotic blocklength and typicality arguments. To this end, let us assume (only in this example) that the channel states $(S_1,S_2,S_3)$ are distributed uniformly in $\mathfrak{S} \triangleq \{(0,1,1), (1,0,1), (1,1,0)\}$, and therefore they are no longer independent. 
In particular, if $S_1=0$, then the transmitter learns $(S_1,S_2,S_3)$ immediately at the first time slot, as $(S_2,S_3)$ can only be $(1,1)$. If $S_1=1$, then at the first time slot $(S_2,S_3)$ is still equally likely to be either $(0,1)$ or $(1,0)$. In whichever case, the transmitter will learn $(S_1,S_2,S_3)$ at the second time slot, after $(S_1,S_2)$ are revealed. 

Let ${\bf x} = (x_1,x_2,x_3), {\bf y} =(y_1,y_2,y_3)$ and ${\bf s} = (s_1,s_2,s_3)$. Recall that a NS-assisted coding scheme with causal CSIT and message size $M$ is defined by a conditional distribution $\mZ({\bf x},\widehat{w}\mid w, {\bf s}, {\bf y})$ for $w \in [M], \widehat{w} \in [M], {\bf x} \in \{0,1\}^3, {\bf y} \in \{0,1\}^3$ and ${\bf s} \in \{0,1\}^3$. 
Let us first assume that, for fixed ${\bf x}, \widehat{w}, w$ and ${\bf y}$,
\begin{enumerate}[label=\textit{A\arabic*:}]
	\item $\mZ({\bf x},\widehat{w}\mid w, (0,s_2,s_3), {\bf y}) = \mZ({\bf x},\widehat{w}\mid w, (0,1,1), {\bf y})$ for all $(s_2, s_3) \in \{0,1\}^2\setminus \{(1,1)\}$;
	\item $\mZ({\bf x},\widehat{w}\mid w, (1,0,0), {\bf y}) = \mZ({\bf x},\widehat{w}\mid w, (1,0,1), {\bf y})$;
	\item $\mZ({\bf x},\widehat{w}\mid w, (1,1,1), {\bf y}) = \mZ({\bf x},\widehat{w}\mid w, (1,1,0), {\bf y})$.
\end{enumerate}
We then only need to specify $\mZ({\bf x},\widehat{w}\mid w, {\bf s}, {\bf y})$ for $w \in [M], \widehat{w} \in [M], {\bf x} \in \{0,1\}^3, {\bf y} \in \{0,1\}^3$ and ${\bf s} \in \mathfrak{S}$, and the remaining values, i.e., $\mZ({\bf x},\widehat{w}\mid w, {\bf s}, {\bf y})$ for ${\bf s} \in \{0,1\}^3\setminus \mathfrak{S}$, are defined accordingly based on  \textit{A1}--\textit{A3}. 
Specifically, \textit{A1} corresponds to the case when $S_1=0$. \textit{A2} corresponds to the case when $S_1=1$ and $S_2=0$. \textit{A3} corresponds to the case when $S_1=1$ and $S_2=1$. 
Besides being a valid conditional distribution, the scheme $\mZ$ needs to also satisfy the NS conditions \textit{C1} and \textit{C2}, and the condition of causal CSIT \textit{C3}. 

We will construct a scheme with $M=4$, i.e., sending a message of 2 bits. First, according to the chain rule of conditional probability, $\mZ$ can be factorized as
\begin{align} \label{eq:ex_factor}
	\mZ({\bf x},\widehat{w}\mid w, {\bf s}, {\bf y}) = \mZ({\bf x}\mid w, {\bf s}, {\bf y}) \times \mZ(\widehat{w}\mid {\bf x}, w,  {\bf s}, {\bf y}).
\end{align}
We set the first factor
\begin{align} \label{eq:ex_first}
	\mZ({\bf x}\mid w, {\bf s}, {\bf y}) = 1/8
\end{align}
for ${\bf x}\in \{0,1\}^3, w\in [4], {\bf y}\in \{0,1\}^3$ and ${\bf s}\in \mathfrak{S}$. The meaning of \eqref{eq:ex_first} is to let the scheme generate $X_i$ i.i.d. uniform, regardless of the values of $w,{\bf s}$ and ${\bf y}$. This immediately guarantees \textit{C1}. 

We then set the second factor
\begin{align} \label{eq:ex_second}
	\mZ(\widehat{w}\mid {\bf x}, w, {\bf s}, {\bf y}) = \begin{cases} T({\bf x}, {\bf y}, {\bf s}), & \widehat{w}= w \\
		\tfrac{1}{3} \big( 1- T({\bf x}, {\bf y}, {\bf s}) \big), & \widehat{w} \neq w
	\end{cases}
\end{align}
for $\widehat{w}\in [4], {\bf x}\in \{0,1\}^3, w\in [4], {\bf y} \in \{0,1\}^3$ and ${\bf s} \in \mathfrak{S}$, where
\begin{align} \label{eq:ex_def_T}
	T({\bf x}, {\bf y}, {\bf s}) \triangleq \prod_{i\colon s_i =1}\mathbb{I}\big[ y_i = x_is_i \big] 
\end{align}
for ${\bf x}\in \{0,1\}^3, {\bf y} \in \{0,1\}^3$ and ${\bf s} \in \mathfrak{S}$.
The meaning of \eqref{eq:ex_second} is explained as follows. Firstly, consider $T$ as an authentication process on the input $({\bf x}, {\bf y}, {\bf s})$, which returns $1$ if $y_i = x_i s_i$ at the two positions where $s_i=1$, and returns $0$ otherwise. Then, if $T({\bf x}, {\bf y}, {\bf s}) = 1$, the scheme outputs $\widehat{W} = w$, i.e., the correct message. If $T({\bf x}, {\bf y}, {\bf s}) = 0$, the scheme outputs $\widehat{W}\sim {\rm Unif}([M]\setminus \{w\})$, i.e., a uniformly distributed  incorrect message. 
One can check from \eqref{eq:ex_second} that $\mZ(\widehat{w}\mid {\bf x}, w,  {\bf s}, {\bf y})$ is a valid conditional distribution, since for each $w\in [4]$ there are $3$ elements from $[4]$ that is not equal to $w$ and one element being equal to $w$, so $\sum_{\widehat{w}\in [4]} \mZ(\widehat{w}\mid {\bf x}, w, {\bf y}, {\bf s}) = 1$.

We now prove that the scheme $\mZ$ thus defined also satisfies \textit{C2} and \textit{C3}. Due to \textit{A1}--\textit{A3}, for \textit{C2}, it suffices to check for $\widehat{w}\in [4], w\in [4], {\bf s} \in \mathfrak{S}$ and ${\bf y}\in \{0,1\}^3$, $\sum_{{\bf x}\in \{0,1\}^3} \mZ({\bf x},\widehat{w}\mid w, {\bf s}, {\bf y})$ is invariant under changes of $w\in [4]$ and ${\bf s}\in \mathfrak{S}$. This is shown as follows.
\begin{align}
	&\sum_{{\bf x}\in \{0,1\}^3} \mZ({\bf x},\widehat{w}\mid w, {\bf s}, {\bf y}) \notag \\
	&\stackrel{\eqref{eq:ex_factor},\eqref{eq:ex_first}}{=} \frac{1}{8} \sum_{{\bf x}\in \{0,1\}^3} \mZ(\widehat{w}\mid w, {\bf x}, {\bf s}, {\bf y}) \\
	&\stackrel{\eqref{eq:ex_second}}{=} \begin{cases}
		\frac{1}{8} \sum_{{\bf x}\in \{0,1\}^3} T({\bf x}, {\bf y}, {\bf s}), & \widehat{w} = w \\
		\frac{1}{8} \sum_{{\bf x}\in \{0,1\}^3} \tfrac{1}{3}\big( 1- T({\bf x}, {\bf y}, {\bf s}) \big), & \widehat{w} \neq w
	\end{cases} \\
	& = \frac{1}{4}
\end{align}
To see the last step, note that for ${\bf s} \in \mathfrak{S}$, there are exactly two positions where $s_i = 1$, and one position where $s_i=0$. Due to symmetry it suffices to consider $(s_1,s_2,s_3) = (0,1,1)$. Then $T({\bf x}, {\bf y}, {\bf s}) = 1$ if and only if $(x_2,x_3) = (y_2,y_3)$, i.e., ${\bf x} = (0,y_2,y_3)$ or ${\bf x} = (1,y_2,y_3)$. Therefore, we have $\frac{1}{8} \sum_{{\bf x}\in \{0,1\}^3} T({\bf x}, {\bf y}, {\bf s}) = \frac{1}{4}$, and $\frac{1}{8} \sum_{{\bf x}\in \{0,1\}^3} \tfrac{1}{3}\big( 1- T({\bf x}, {\bf y}, {\bf s}) \big) = \frac{1}{8}\cdot \frac{1}{3}(8-\sum_{{\bf x}\in \{0,1\}^3}T({\bf x}, {\bf y}, {\bf s})) = \frac{1}{4}$. 

For \textit{C3}, again due to \textit{A1}--\textit{A3}, it suffices to check 
\begin{equation}
\begin{aligned}
	\sum_{(x_2,x_3)\in \{0,1\}^2} \mZ((x_1,x_2,x_3),\widehat{w}\mid w, (1,0,1), {\bf y}) \\ = \sum_{(x_2,x_3)\in \{0,1\}^2} \mZ((x_1,x_2,x_3),\widehat{w}\mid w, (1,1,0), {\bf y})
\end{aligned}
\end{equation}
for all $x_1\in \{0,1\}, \widehat{w} \in [M], w \in [M]$ and ${\bf y} \in \{0,1\}^3$. Due to \eqref{eq:ex_factor}, \eqref{eq:ex_first} and \eqref{eq:ex_second}, it suffices to show
\begin{equation}\label{eq:ex_check_C3}
\begin{aligned}
	\sum_{(x_2,x_3)\in \{0,1\}^2} T({\bf x}, {\bf y}, (1,0,1))  \\
	=\sum_{(x_2,x_3)\in \{0,1\}^2} T({\bf x}, {\bf y}, (1,1,0))
\end{aligned}
\end{equation}
for all $x_1\in \{0,1\}$ and ${\bf y} \in \{0,1\}^3$.
Note that if $x_1\neq y_1$, then both the LHS and the RHS of \eqref{eq:ex_check_C3} are equal to $0$. If $x_1=y_1$, then both the LHS and the RHS of \eqref{eq:ex_check_C3} are equal to $2$. Therefore, \textit{C3} is satisfied.
The scheme $\mZ$ thus satisfies both the NS and the causality constraints. 

Finally, let us show how the scheme works when it is connected with the channel. Conditioned on $(S_1,S_2,S_3) = (0,1,1)$, the authentication will always pass since the channel guarantees that $Y_2 =  X_2S_2$ and $Y_3= X_3S_3$. In this case the output message $\widehat{W} = W$. The cases for $(S_1,S_2,S_3) = (1,0,1)$ and $(S_1,S_2,S_3) = (1,1,0)$ are similarly argued. Therefore, the scheme will always output $\widehat{W} = W$ at the receiver when the channel is connected.

\section{Proof of Theorem \ref{thm:strictly_causal} \label{proof:strictly_causal}}
Since any rate $R< \max_{\mP_{\! X\mid S}} I(X;Y\mid S)$ is achievable even without the transmitter knowing $T$, the achievability proof for Theorem \ref{thm:strictly_causal} follows from previous results (e.g., Theorem \ref{thm:NS_capacity}). Therefore, it suffices to provide the converse proof for Theorem \ref{thm:strictly_causal}, i.e., when $S$ is known non-causally, and $T$ is known strictly causally to the transmitter, the capacity is still not more than $\max_{\mP_{\! X\mid S}} I(X;Y\mid S)$.
The key of the proof is to construct a channel-absent distribution (under which the decoded message $\widehat{W}$ is independent of the message $W$) and applies the data-processing inequality for relative entropy.

Let $\mZ \in \mathcal{Z}^{\NS, \mixed}(M,n)$ be any NS-assisted coding scheme described in Section \ref{sec:NSschemes_mixed}. Recall from \eqref{eq:joint_prob_strc} that the joint distribution of $(W,S^n,T^n, X^n,Y^n, \widehat{W})$ is,
\begin{align}
	&\sfp_{W S^nT^nX^n Y^n \widehat{W}}(w, s^n, t^n, x^n, y^n,  \widehat{w}) \notag \\
	&= \frac{1}{M} \Bigg(\prod_{i=1}^n  \mP_{\! S}(s_i) \mP_{\! T}(t_i) \mN(y_i\mid x_i,s_i,t_i)\Bigg)   \mZ \big(x^n, \widehat{w} \mid w,s^{n}, t^{n-1}, y^n\big)  \notag 
\end{align}
We also know that $T_i$ is independent of $(X^i,S^n), \forall i\in [n]$. This follows from the definition that $S^n$ and $T^n$ are independent i.i.d. sequences, and that for each $i\in [n]$, $X^i$ is produced by the scheme up to the $i^{th}$ channel use, at which point $T_i$ has not been provided to the scheme. We therefore have
\begin{align}
	&\mP_{\! T}(t_i) \mN(y_i\mid x_i,s_i,t_i) \notag \\
	&= \sfp(t_i) \sfp(y_i\mid x_i,s_i,t_i) \\
	&= \sfp(t_i\mid x_i,s_i) \sfp(y_i\mid x_i,s_i,t_i) \\
	&= \sfp(t_i, y_i \mid x_i,s_i) \\
	&= \sfp(y_i\mid x_i,s_i)  \sfp(t_i\mid x_i, y_i,s_i)
\end{align}
and rewrite
\begin{align}
	&\sfp_{W S^nT^n X^n Y^n \widehat{W}}(w, s^n, t^n, x^n, y^n,  \widehat{w}) \notag \\
	&=\frac{1}{M} \Bigg(\prod_{i=1}^n  \mP_{\! S}(s_i) \sfp(y_i\mid x_i,s_i) \sfp(t_i\mid x_i, y_i,s_i) \Bigg)   \mZ \big(x^n, \widehat{w} \mid w,s^{n}, t^{n-1}, y^n\big)
\end{align}

To apply the data-processing inequality, consider an alternative distribution  $\sfq_{WS^nT^n X^nY^n\widehat{W}} \in \mathcal{P}([M] \times \mathcal{S}^n \times \mathcal{T}^n \times \mathcal{X}^n  \times \mathcal{Y}^n \times [M])$, defined as
\begin{align}
	&\sfq_{WS^nT^n X^nY^n\widehat{W}}(w,s^n,t^n, x^n,y^n,\widehat{w}) \notag \\
	&\triangleq \frac{1}{M} \Bigg(\prod_{i=1}^n \mP_{\! S}(s_i) \sfp(y_i\mid s_i) \sfp(t_i\mid x_i, y_i,s_i) \Bigg) \mZ \big(x^n, \widehat{w} \mid w,s^{n}, t^{n-1}, y^n\big)
\end{align}
Note that the only difference in $\sfq$ from $\sfp$ is that $\sfp(y_i\mid x_i,s_i)$ is replaced by $\sfp(y_i\mid s_i)$. We now argue that $W$ and $\widehat{W}$ are independent under this alternative distribution $\sfq$, i.e, 
\begin{align} \label{eq:indep_WW_hat_q}
	\sfq_{W\widehat{W}}(w,\widehat{w}) = \sfq_{W}(w)\sfq_{\widehat{W}}(\widehat{w}) = \frac{1}{M}\sfq_{\widehat{W}}(\widehat{w}) = \sfp_{W}(w)\sfq_{\widehat{W}}(\widehat{w}).
\end{align}
Intuitively, $\sfq$ corresponds to a setting where $(W,S^n,T^n,X^n,Y^n,\widehat{W})$ are generated in the following manner. Suppose $(S^n,Y^n)$ are generated first at a third party according to the distribution $\prod_{i=1}^n\mP_S(s_i)\sfp(y_i\mid s_i)$.  $(S^n,Y^n)$ is then revealed to both the transmitter and the receiver. This process happens before the communication.
After this, the transmitter generates $W$, and obtains $X_1$ from the scheme by providing it with $(W,S^n)$. It then generates $T_1$ according to the marginal distribution $\sfp(t_1\mid x_1,y_1,s_1)$. Sequentially for $i=2,\ldots, n$, the transmitter obtains $X_i$ from the scheme by providing it with $T_{i-1}$, and then generates $T_i$ according to the marginal distribution $\sfp(t_i\mid x_i,y_i,s_i)$. Meanwhile, the receiver obtains $\widehat{W}$ from the scheme by providing it with $Y^n$. 
During the whole process the channel with state is absent, and $(S^n,Y^n)$ is not more than a form of shared randomness. In fact the receiver may obtain $\widehat{W}$ even before $W$ is generated. Therefore, $W$ and $\widehat{W}$ must be independent under $\sfq$.  We delegate the formal proof of $\sfq(w,\widehat{w}) = \frac{1}{M}\sfq(\widehat{w})$ to the end of this section.

Proceeding along the lines of \cite{Polyanskiy_Poor_Verdu} let us consider the relative entropy,
\begin{align}
	&D\big(\sfp_{WS^nT^nX^nY^n\widehat{W}} \Vert \sfq_{WS^nT^nX^nY^n\widehat{W}}\big) \notag \\
	&=\mathbb{E}_{\sfp}\Bigg[ \log_2 \Bigg( \frac{\sfp(W,S^n,T^n,X^n,Y^n,\widehat{W})}{\sfq(W,S^n,T^n,X^n,Y^n,\widehat{W})} \Bigg) \Bigg] \\
	&=\mathbb{E}_{\sfp}\Bigg[ \log_2 \Bigg( \prod_{i=1}^n \frac{\sfp(Y_i\mid X_i,S_i)}{\sfp(Y_i\mid S_i)} \Bigg) \Bigg] \\
	&= \sum_{i=1}^n I_{\sfp}(X_i;Y_i\mid S_i) \label{eq:conv_1}
\end{align}

On the other hand, 
\begin{align}
	&D\big(\sfp_{WS^nT^nX^nY^n\widehat{W}} \Vert \sfq_{WS^nT^nX^nY^n\widehat{W}}\big) \notag \\
	&\geq D(\sfp_{W\widehat{W}} \Vert \sfq_{W\widehat{W}}) \label{eq:use_dp} \\
	&=D(\sfp_{W\widehat{W}} \Vert \sfp_{W}\sfq_{\widehat{W}}) \label{eq:use_indp} \\
	&\geq D(\sfp_{W\widehat{W}} \Vert \sfp_{W}\sfp_{\widehat{W}}) \label{eq:use_nn_D} \\
	&= I_{\sfp}(W;\widehat{W}) \label{eq:conv_2}
\end{align}
Step \eqref{eq:use_dp} applies the data-processing inequality for relative entropy (e.g., \cite[Lem. 3.11]{csiszar2011information}) to a channel which maps $WS^nT^nX^nY^n\widehat{W} \to W\widehat{W}$.
Step \eqref{eq:use_indp} uses \eqref{eq:indep_WW_hat_q}. Step \eqref{eq:use_nn_D} is because $D(\sfp_{W\widehat{W}} \Vert \sfp_{W}\sfq_{\widehat{W}}) = \mathbb{E}_{\sfp}[\log_2 \big( \frac{\sfp(W,\widehat{W})}{\sfp(W)\sfq(\widehat{W})} \big)] = \mathbb{E}_{\sfp}[\log_2 \big( \frac{\sfp(W,\widehat{W})}{\sfp(W)\sfp(\widehat{W})} \big)]+\mathbb{E}_{\sfp}[\log_2 \big( \frac{\sfp(\widehat{W})}{\sfq(\widehat{W})} \big)] = D(\sfp_{W\widehat{W}} \Vert \sfp_{W}\sfp_{\widehat{W}}) + D(\sfp_{\widehat{W}} \Vert \sfq_{\widehat{W}})$ together with the non-negativity of the KL divergence $D(\sfp_{\widehat{W}} \Vert \sfq_{\widehat{W}}) \geq 0$.
Combining \eqref{eq:conv_1} and \eqref{eq:conv_2}, we have that 
\begin{align} \label{eq:singleletter}
	I_{\sfp}(W;\widehat{W}) \leq \sum_{i=1}^n I_{\sfp}(X_i;Y_i\mid S_i) \leq n \max_{\mP_{\!X\mid S}}I(X;Y\mid S)
\end{align}
where the maximization is over all $\mP_{\!X\mid S} \in \mathcal{P}(\mathcal{X}\mid \mathcal{S})$  such that $(S,X,Y)\sim \mP_{\!S}(s) \mP_{\!X}(x\mid s) \mN(y\mid x,s)$.

For any coding schemes $\mZ_n \in \mathcal{Z}^{\NS,\mixed}(M_n,n)$ for which the probability of success $\lim_{n\to \infty} \eta(\mZ_n) = 1$ and $\lim_{n\to \infty} \frac{\log_2(M_n)}{n} \geq R$, Fano's inequality implies that
\begin{align} \label{eq:use_Fano}
	R \leq \lim_{n\to \infty} \frac{1}{n} I_{\sfp}(W;\widehat{W}) \stackrel{\eqref{eq:singleletter}}{\leq} \max_{\mP_{\!X\mid S}}I(X;Y\mid S).
\end{align}
Thus, we conclude that  $C^{\NS,\mixed} \leq \max_{\mP_{\!X\mid S}}I(X;Y\mid S)$. \hfil \qed

Let us prove that $\sfq(w,\widehat{w}) = \frac{1}{M}\sfq(\widehat{w})$. This is done by recursively using the properties of the scheme $\mZ$.
\begin{align}
	&\sfq(w,\widehat{w}) \notag \\
	&=\sum_{y^n,s^n,x^n,t^n}\frac{1}{M} \Bigg(\prod_{i=1}^n \mP_{\! S}(s_i) \sfp(y_i\mid s_i) \sfp(t_i\mid x_i, y_i,s_i) \Bigg) \mZ \big(x^n, \widehat{w} \mid w,s^{n}, t^{n-1}, y^n\big) \\
	&=\frac{1}{M}\sum_{y^n,s^n}\sfq(s^n,y^n) \sum_{x^n,t^n}   \Bigg(\prod_{i=1}^{n} \sfp(t_i\mid x_i, y_i,s_i)\Bigg)  \mZ \big(x^n, \widehat{w} \mid w,s^{n}, t^{n-1}, y^n\big)\\
	&=\frac{1}{M}\sum_{y^n,s^n}\sfq(s^n,y^n) \sum_{x^n,t^{n-1}}   \Bigg(\prod_{i=1}^{n-1} \sfp(t_i\mid x_i, y_i,s_i)\Bigg)  \mZ \big(x^n, \widehat{w} \mid w,s^{n}, t^{n-1}, y^n\big)\\
	&\stackrel{\eqref{eq:cond_strc_3}}{=}\frac{1}{M}\sum_{y^n,s^n}\sfq(s^n,y^n) \sum_{x^{n-1},t^{n-1}}   \Bigg(\prod_{i=1}^{n-1} \sfp(t_i\mid x_i, y_i,s_i)\Bigg)  \mZ \big(x^{n-1}, \widehat{w} \mid w,s^{n}, t^{n-2}, y^n\big)\\
	&~~~~\vdots \\
	&\stackrel{\eqref{eq:cond_strc_3}}{=}\frac{1}{M}\sum_{y^n,s^n}\sfq(s^n,y^n) \sum_{x_1,t_1}   \Bigg( \sfp(t_1\mid x_1, y_1,s_1)\Bigg)  \mZ \big(x_1, \widehat{w} \mid w,s^n, y^n\big)\\
	&=\frac{1}{M}\sum_{y^n,s^n}\sfq(s^n,y^n) \sum_{x_1}     \mZ \big(x_1, \widehat{w} \mid w,s^n, y^n\big)\\
	&\stackrel{\eqref{eq:cond_strc_2}}{=}\frac{1}{M}\sum_{y^n,s^n}\sfq(s^n,y^n)    \mZ \big(\widehat{w} \mid y^n \big)\\
	&=\frac{1}{M}\sum_{y^n}\sfq(y^n)    \sfq \big(\widehat{w} \mid y^n \big)\\
	&=\frac{1}{M}\sfq (\widehat{w})
\end{align}
\hfil \qed

\bibliographystyle{IEEEtran}
\bibliography{../bib_file/yy.bib}

\begin{thebibliography}{10}
\providecommand{\url}[1]{#1}
\csname url@samestyle\endcsname
\providecommand{\newblock}{\relax}
\providecommand{\bibinfo}[2]{#2}
\providecommand{\BIBentrySTDinterwordspacing}{\spaceskip=0pt\relax}
\providecommand{\BIBentryALTinterwordstretchfactor}{4}
\providecommand{\BIBentryALTinterwordspacing}{\spaceskip=\fontdimen2\font plus
\BIBentryALTinterwordstretchfactor\fontdimen3\font minus \fontdimen4\font\relax}
\providecommand{\BIBforeignlanguage}[2]{{%
\expandafter\ifx\csname l@#1\endcsname\relax
\typeout{** WARNING: IEEEtran.bst: No hyphenation pattern has been}%
\typeout{** loaded for the language `#1'. Using the pattern for}%
\typeout{** the default language instead.}%
\else
\language=\csname l@#1\endcsname
\fi
#2}}
\providecommand{\BIBdecl}{\relax}
\BIBdecl

\bibitem{lapidothCommonState}
A.~Lapidoth and Y.~Steinberg, ``The multiple-access channel with causal side information: Common state,'' \emph{IEEE Transactions on Information Theory}, vol.~59, no.~1, pp. 32--50, 2012.

\bibitem{shannon_CSIT}
C.~E. Shannon, ``Channels with side information at the transmitter,'' \emph{IBM Journal of Research and Development}, vol.~2, no.~4, pp. 289--293, 1958.

\bibitem{gel1980coding}
S.~Gel'fand and M.~Pinsker, ``Coding for channels with random parameters,'' \emph{Probl. Contr. Inform. Theory}, vol.~9, no.~1, pp. 19--31, 1980.

\bibitem{leditzky2020playing}
F.~Leditzky, M.~A. Alhejji, J.~Levin, and G.~Smith, ``Playing games with multiple access channels,'' \emph{Nature communications}, vol.~11, no.~1, p. 1497, 2020.

\bibitem{seshadri2023separation}
A.~Seshadri, F.~Leditzky, V.~Siddhu, and G.~Smith, ``On the separation of correlation-assisted sum capacities of multiple access channels,'' \emph{IEEE Transactions on Information Theory}, vol.~69, no.~9, pp. 5805--5844, 2023.

\bibitem{Bennett_Shor_Smolin_Thapliyal_PRL}
\BIBentryALTinterwordspacing
C.~H. Bennett, P.~W. Shor, J.~A. Smolin, and A.~V. Thapliyal, ``Entanglement-assisted classical capacity of noisy quantum channels,'' \emph{Phys. Rev. Lett.}, vol.~83, pp. 3081--3084, Oct 1999. [Online]. Available: \url{https://link.aps.org/doi/10.1103/PhysRevLett.83.3081}
\BIBentrySTDinterwordspacing

\bibitem{cubitt2011zero}
T.~S. Cubitt, D.~Leung, W.~Matthews, and A.~Winter, ``Zero-error channel capacity and simulation assisted by non-local correlations,'' \emph{IEEE Transactions on Information Theory}, vol.~57, no.~8, pp. 5509--5523, 2011.

\bibitem{agarwal2024nonlocalityassistedenhancementerrorfreecommunication}
\BIBentryALTinterwordspacing
K.~Agarwal, S.~G. Naik, A.~Chakraborty, S.~Sen, P.~Ghosal, B.~Paul, M.~Banik, and R.~K. Patra, ``Nonlocality-assisted enhancement of error-free communication in noisy classical channels,'' 2024. [Online]. Available: \url{https://arxiv.org/abs/2412.04779}
\BIBentrySTDinterwordspacing

\bibitem{matthews2012linear}
W.~Matthews, ``A linear program for the finite block length converse of {Polyanskiy–Poor–Verdú} via nonsignaling codes,'' \emph{IEEE Transactions on Information Theory}, vol.~58, no.~12, pp. 7036--7044, 2012.

\bibitem{fawzi2024MAC}
O.~Fawzi and P.~Fermé, ``Multiple-access channel coding with non-signaling correlations,'' \emph{IEEE Transactions on Information Theory}, vol.~70, no.~3, pp. 1693--1719, 2024.

\bibitem{fawzi2024broadcast}
------, ``Broadcast channel coding: Algorithmic aspects and non-signaling assistance,'' \emph{IEEE Transactions on Information Theory}, vol.~70, no.~11, pp. 7563--7580, 2024.

\bibitem{Yao_Jafar_NS_DoF}
Y.~Yao and S.~A. Jafar, ``Can non-signaling assistance increase the degrees of freedom of a wireless network?'' \emph{IEEE Transactions on Information Theory}, vol.~72, no.~2, pp. 844--864, 2026.

\bibitem{Quek_Shor}
\BIBentryALTinterwordspacing
Y.~Quek and P.~W. Shor, ``Quantum and superquantum enhancements to two-sender, two-receiver channels,'' \emph{Phys. Rev. A}, vol.~95, p. 052329, May 2017. [Online]. Available: \url{https://link.aps.org/doi/10.1103/PhysRevA.95.052329}
\BIBentrySTDinterwordspacing

\bibitem{Yao_Jafar_CSITTP}
\BIBentryALTinterwordspacing
Y.~Yao and S.~A. Jafar, ``Virtual signaling of {CSIT} via non-signaling assistance,'' \emph{ArXiv:2506.17803}, 2025. [Online]. Available: \url{https://arxiv.org/abs/2506.17803}
\BIBentrySTDinterwordspacing

\bibitem{NIT}
A.~El~Gamal and Y.-H. Kim, \emph{Network information theory}.\hskip 1em plus 0.5em minus 0.4em\relax Cambridge University Press, 2011.

\bibitem{gallego2014nonlocality}
R.~Gallego, L.~E. W{\"u}rflinger, R.~Chaves, A.~Ac{\'\i}n, and M.~Navascu{\'e}s, ``Nonlocality in sequential correlation scenarios,'' \emph{New Journal of Physics}, vol.~16, no.~3, p. 033037, 2014.

\bibitem{TONS_boxes}
R.~Ramanathan, M.~Banacki, R.~Ravell~Rodr{\'\i}guez, and P.~Horodecki, ``Single trusted qubit is necessary and sufficient for quantum realization of extremal no-signaling correlations,'' \emph{npj Quantum Information}, vol.~8, no.~1, p. 119, 2022.

\bibitem{Polyanskiy_Poor_Verdu}
Y.~Polyanskiy, H.~V. Poor, and S.~Verdu, ``Channel coding rate in the finite blocklength regime,'' \emph{IEEE Transactions on Information Theory}, vol.~56, no.~5, pp. 2307--2359, May 2010.

\bibitem{csiszar2011information}
I.~Csisz{\'a}r and J.~K{\"o}rner, \emph{Information theory: coding theorems for discrete memoryless systems}.\hskip 1em plus 0.5em minus 0.4em\relax Cambridge University Press, 2011.

\end{thebibliography}
\end{document}